\documentclass[lettersize,journal]{IEEEtran}

\usepackage{enumerate}
\usepackage[shortlabels]{enumitem}
\usepackage{booktabs} 
\usepackage{amssymb}
\usepackage{amsfonts}
\usepackage{mathrsfs}
\usepackage{amsmath} 
\usepackage{amsthm}
\usepackage{mathtools}

\DeclareMathOperator{\diag}{diag}

\newcommand{\coded}[1]{\widetilde{#1}}

\newcommand{\ceil}[1]{ {\lceil#1\rceil}}

\usepackage{algorithm}
\usepackage[noend]{algpseudocode}
\algrenewcommand\algorithmiccomment[1]{\textcolor{lightgray}{\hfill // #1}}

\usepackage{pgfplots}
\pgfplotsset{compat=1.18}
\usepackage{graphicx}

\usepackage{hyperref}
\hypersetup{
 colorlinks,
 linkcolor={blue!100!black},
 citecolor={blue!100!black},
 urlcolor={blue!80!black} 
}

\usepackage[style=ieee,
            backend=biber,
            doi=false,
            ]{biblatex}


\addbibresource{ref.bib}
 
\usepackage{subcaption}
\usepackage{wrapfig}


\newtheorem{corollary}{Corollary}

\theoremstyle{definition}
\newtheorem{definition}{Definition}
\newtheorem{remark}{Remark}
\newtheorem{theorem}{Theorem}
\newtheorem{lemma}{Lemma}

\numberwithin{rc}{section}

\usepackage{xcolor}
\usepackage{soul} 



\newcommand{\bbF}{\mathbb{F}}

\newcommand{\bfg}{\mathbf{g}}

\newcommand{\bfs}{\mathbf{s}}

\newcommand{\bfw}{\mathbf{w}}
\newcommand{\bfx}{\mathbf{x}}

\newcommand{\bfA}{\mathbf{A}}
\newcommand{\bfB}{\mathbf{B}}
\newcommand{\bfC}{\mathbf{C}}
\newcommand{\bfD}{\mathbf{D}}

\newcommand{\bfG}{\mathbf{G}}

\newcommand{\bfI}{\mathbf{I}}

\newcommand{\bfL}{\mathbf{L}}

\newcommand{\bfP}{\mathbf{P}}
\newcommand{\bfQ}{\mathbf{Q}}

\newcommand{\bfV}{\mathbf{V}}

\newcommand{\cR}{\mathcal{R}}
\newcommand{\cS}{\mathcal{S}}
\newcommand{\cT}{\mathcal{T}}

\newcommand{\rmP}{\mathrm{P}}

\newcommand{\rmS}{\mathrm{S}}
\newcommand{\rmT}{\mathrm{T}}


\begin{document}

\title{On the Encoding Process in Decentralized Systems}

\author{Canran Wang,~\IEEEmembership{Member,~IEEE,}
and Netanel Raviv,~\IEEEmembership{Senior Member,~IEEE}
\IEEEcompsocitemizethanks{
\IEEEcompsocthanksitem The authors are with the Department of Computer Science and Engineering, Washington University in St. Louis, St. Louis, MO 63103 USA. Parts of this work have previously appeared in [DOI:10.1109/ITW54588.2022.9965873].}
}

\maketitle

\begin{abstract}
    We consider the problem of encoding information in a system of~$N=K+R$ processors that operate in a decentralized manner, i.e., without a central processor which orchestrates the operation.
    The system involves~$K$ source processors, each holding some data modeled as a vector over a finite field.
    The remaining~$R$ processors are sinks, and each of which requires a linear combination of all data vectors.
    These linear combinations are distinct from one sink to another, and are specified by a generator matrix of a systematic linear code.

    To capture the communication cost of decentralized encoding, we adopt a linear network model in which the process proceeds in consecutive communication rounds.
    In every round, every processor sends and receives one message through each one of its~$p$ ports.
    Moreover, inspired by network coding literature, we allow processors to transfer linear combinations of their own data and previously received data.
    We propose a framework that addresses the problem on two levels.
    On the \emph{universal} level, we provide a solution to the decentralized encoding problem for~\emph{any} possible linear code.
    On the~\emph{specific} level, we further optimize our solution towards systematic Reed-Solomon codes, as well as their variant, Lagrange codes, for their prevalent use in coded storage and computation systems.
    Our solutions are based on a newly-defined collective communication operation called~\emph{all-to-all encode}.
    \end{abstract}
\begin{IEEEkeywords}
    Error-correcting codes, Distributed storage systems, Network coding. 
\end{IEEEkeywords}
\section{Introduction}
The problem of decentralized encoding emerges during the encoding phase in coded decentralized systems, including blockchain networks~\cite{li2020polyshard,wang2022breaking,khooshemehr2021discrepancy}, sensor networks~\cite{li2019decode}, and the internet of things~\cite{chen2018privacy}.
In such systems, raw data is generated independently in distributed source nodes, and then linearly encoded and delivered to sink (or parity) nodes via direct node-to-node communication links, without a central authority which orchestrates the operation. 
To study the communication cost in this setting, we focus on the decentralized encoding problem, in which the role of processors can be captured by nodes in a distributed storage/computation system, devices in an IoT system, participants in a blockchain network, and more.

\begin{definition}\label{definition:decentralized-encoding}
\textbf{(Decentralized encoding)}
Consider a system with two disjoint sets of processors~$\cS=\{\rmS_0,\ldots,\rmS_{K-1}\}$ and~$\cT=\{\rmT_0,\ldots,\rmT_{R-1}\}$. 
Every source processor~$\rmS_k\in\cS$ initially possesses some data vector~$\bfx_k\in\bbF_q^W$, where~$\bbF_q$ is a finite field with~$q$ elements, and~$W$ denotes the vector length.
After decentralized encoding, every sink (or parity) processor~$\rmT_r\in\cT$ obtains coded data~$\coded{\bfx}_r\in\bbF_q^W$.
For a generator matrix~$\bfG=\left[
\begin{array}{c|c} 
  \bfI & \bfA 
\end{array} 
\right]\in\bbF_q^{K\times N}$ of some systematic linear code that is known a priori to all processors, where~$\bfI$ is the identity matrix, the coded data satisfies
$$
\left(\coded{\bfx}_0^\intercal,\ldots,\coded{\bfx}_{R-1}^\intercal\right) = \left({\bfx_0}^\intercal,\ldots,{\bfx_{K-1}^\intercal}\right)\cdot \bfA.
$$
\end{definition}
That is, decentralized encoding considers a scenario in which 
every source processor~$\rmS_k\in\cS$ has its own data, and every sink processor~$\rmT_r\in\cT$ requires a linear combination of all data.
The linear combinations are determined by matrix~$\bfA$, i.e., the non-systematic part of the generator matrix~$\bfG$ of some systematic linear code.
Clearly, the properties of the linear code determine the decentralized encoding process.

As a motivating example, consider a local sensor network of~$N$ nodes, $K$~of which are thermometers (sources), and each thermometer holds an independent temperature reading modeled as a finite field element.
The temperature data is to be encoded across the network via device-to-device communication links, so that all temperature readings can be obtained (decoded) from any~$K$ out of the~$N$ nodes; such decoding guarantees can be made using the well-known Reed-Solomon code, which is discussed at length in the sequel.
As another example, consider a distributed storage system with~$N$ geographically separated nodes which store a large database that is locally collected at some~$K$ of them.
Regenerating codes~\cite{dimakis2010network,ramkumar2022codes}, which are widely used in such settings to maintain data availability in cases of node failures, are a special use case of our framework.

Yet another example is the recent \emph{coded computing} literature~\cite{li2020coded}, which studies distributed computation systems with~$N$ nodes which hold data that ought to undergo some polynomial computation (e.g., matrix multiplication or hash function evaluation).
Our work can be used to adapt coded computing systems (e.g., Lagrange code~\cite{yu2019lagrange}) to the decentralized setting, where~$K$ of the nodes generate data locally, and coded computing is to be carried out in a master-less fashion.

Inspired by the extensive network-coding literature~\cite{ahlswede2000network,li2003linear}, we allow processors to transfer linear combinations of their own data and previously received messages to complete the decentralized encoding process.
As a result, any solution to this problem contains two separate components, \emph{scheduling} and \emph{coding scheme}. 
The former determines which processor communicates with which other processors, and the latter determines the coefficients in the linear combinations that processors transmit to one another.

Motivated by this distinction, we address the decentralized encoding problem on two levels, the \emph{universal} and the \emph{specific}. 
On the universal level, we seek a scheduling by which the decentralized encoding problem with \emph{any} linear code could be solved by only varying the coding scheme, i.e., the coefficients in the transmitted packets throughout the algorithm.
That is, a \emph{universal algorithm} is a series of instructions which indicate which processor communicates with which other processors as the algorithm proceeds, alongside a mechanism that for any given~$\bfA$ determines the coefficients that are used by each processor in order to linearly combine previously received data in each transmission.
Universal algorithms are important in cases where the scheduling must be determined prior to knowing~$\bfA$, as well as in cases where the same scheduling is applied in consecutive computation of several different matrices. 
As demonstrated in the sequel, universal algorithms also play an important role as a sub-routine in specific ones.

On the specific level, we seek both scheduling and coding scheme that are uniquely tailored towards a specific linear code of interest. 
Clearly, such specific algorithms are important only if they outperform universal ones, since by definition, every universal algorithm subsumes a specific algorithm for all systematic linear codes.
In this paper, we are particularly interested in Reed-Solomon codes and Lagrange codes~\cite{yu2019lagrange,raviv2019private,wang2022breaking}, for their prevalent use in distributed storage and computation systems.

\begin{remark}
    We assume that all processors have prior knowledge of the system design, including the network structure and the identity of the matrix~$\bfA$. When the matrix is known, the length of each round is also predetermined for every processor.
    By decentralization, we mean that participants operate independently, without a central coordinating authority.
    Importantly, in both the specific and universal settings, neither the scheduling nor the coding scheme depends on the input data~$\bfx_k$; rather, they are determined exclusively by the matrix~$\bfA$, which is known in advance to all processors.
\end{remark}

\subsection*{Communication model}

We assume a fully-connected, multi-port and homogeneous network which allows bidirectional communication.
First, the full connectivity enables direct message delivery between every possible pair of processors.
Second, each processor can simultaneously send a message to one processor, and receive a message from a (potentially different) processor, through each one of its~$p$ ports.
The data which initially resides in each source processor consists of bits, which are grouped together to represent an element in~$\bbF_q$.

We wish to evaluate our algorithms in terms of their running times.
Previous works in distributed storage systems~\cite{dimakis2010network,rashmi2011optimal,tamo2012zigzag,papailiopoulos2013repair} consider~\emph{bandwidth} (i.e., total number of bits communicated) as a major communication metric.
However, as message deliveries may happen in parallel, the (total) bandwidth may not faithfully reflect the actual communication cost (e.g., time) of a protocol.
Moreover, the bandwidth perspective may fail to capture the inherent cost of message passing.
In such a model, a message of one megabyte and a thousand messages of one kilobyte have the identical communication cost, while various communication overheads are ignored.

We adopt the simple linear model of~\cite{fraigniaud1994methods}, in which the cost to pass a message of size~$m$ is~$\alpha+\beta\cdot m$, where~$\alpha$ is the startup time, and~$\beta$ refers to the per-unit data transfer cost.
Although other communication models such as BSP~\cite{valiant1990bridging}, LogP~\cite{culler1993logp} and Postal~\cite{barnoy92postal} are relevant to our setting, we adopt the classic linear cost model for ease of algorithm design and analysis.

Therefore, we assume that the system operates in consecutive communication~\emph{rounds}.
During one round, every port in the network is utilized at most once.
As a result, round~$t$ incurs a time cost of~$\alpha+\beta\cdot m_t$, where~$m_t$ is the size of the largest message (measured in the number of bits) transferred among all ports of all processors in that round. We focus on two communication metrics:
\begin{itemize}
    \item $C_1$: the number of rounds incurred by completing the decentralized encoding.
    \item $C_2$: the number of elements in~$\bbF_q$ transferred in a sequence during the decentralized encoding, i.e.,~$C_2=\sum_{t\in[C_1]}m_t$.
\end{itemize}
Our main goal in this paper is to improve the total communication cost given by
\begin{equation*}
    \textstyle C=\alpha\cdot C_1+\left(\beta\ceil{\log_2q}\right) \cdot C_2.
\end{equation*}

\subsection*{Our Contributions}

In this paper, we investigate the decentralized encoding problem and provide several efficient solutions.
Specifically,
\begin{itemize}
    \item We propose a framework that efficiently addresses the decentralized encoding problem for systematic codes.
    The framework centers around~\emph{all-to-all encode}, a new collective operation concerning a setting in which every processor simultaneously serves both as a source and as a sink. 
    That is, every processor holds a initial data symbol in~$\bbF_q$ and requires a linear combination of all data; the linear combinations are determined by columns of some square matrix (Section~\ref{section:framework}).
    Furthermore, we extend our decentralized encoding framework to non-systematic codes (Appendix~\ref{appendix:non-sys}).

    \item We provide implementations of all-to-all encode operation on different levels.
    On the universal level, we develop a communication-efficient universal algorithm that implements all-to-all encode for all possible square coding matrices, which is optimal in~$C_1$ and achieves the lower bound on~$C_2$ within a constant factor of~$\sqrt{2}$ (Section~\ref{section:universal}).
    On the specific level, we present a family of all-to-all encode algorithms tailored for either Vandermonde, Cauchy-like (see~\eqref{eq:define-cauchy-like} below), or Lagrange matrices (Remark~\ref{remark:lagrange}), which optimize the aforementioned universal algorithm in terms of~$C_2$ (Section~\ref{section:vandermonde}).

    \item We combine aforementioned results and provide a decentralized encoding solution for systematic Reed-Solomon codes with a significant gain in communication costs over applying the universal all-to-all encode algorithm.
    This solution can be extended to Lagrange codes (Section~\ref{section:reedsolo}).
\end{itemize}

\begin{figure}
    \centering
    \includegraphics[width=0.48\textwidth]{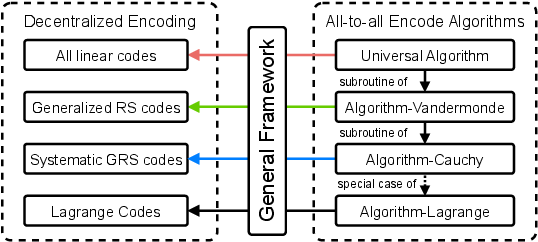}
    \caption{
    Summary of our contributions.
    First, we provide a general framework to the decentralized encoding problem, which rely on a newly defined cooperative communication operation called~\emph{all-to-all encode}.
    We provide a universal algorithm to implement all-to-all encode for all possible coding matrices.
    The universal algorithm serve as a subroutine to implement all-to-all encode specifically for Vandermonde matrices, which can be used for decentralized encoding for generalized Reed-Solomon codes.
    It then serves as a subroutine of algorithms for Cauchy-like and Lagrange matrices for systematic generalized Reed-Solomon codes and Lagrange codes, respectively.
   }
    \label{fig:flow}
\end{figure}

Figure~\ref{fig:flow} provides a graphical illustration of our contribution.

\section{Related works}\label{section:related-works}

    Collective communication operations (e.g., broadcast, reduce, scatter, gather, etc.) are fundamental building blocks for communication patterns in distributed and parallel systems.
    They concern data movement between participants in a network, and have been studied extensively due to their importance in parallel algorithms under different network models, see~\cite{grama2003} for a thorough introduction to the topic. 

    Most closely related to our work, Jeong et al.~\cite{jeong2018masterless} study the decentralized encoding process of systematic MDS codes as a sub-routine of a coded Fast-Fourier Transform algorithm.
    Specifically,~\cite{jeong2018masterless} provides a universal algorithm called~\emph{multi-reduce} based on the~\emph{broadcast} and~\emph{all-gather} operations.
    However, multi-reduce is based on the one-port model, i.e.,~$p=1$, and assumes that~$R\mid K$, while our setting is generalized for any~$R,K$ and~$p$.
    It also incurs~$(R-2\sqrt{R}-1)(\beta\ceil{\log_2q})W$ more communications than our proposed method.
    Furthermore, specific algorithms are not discussed in~\cite{jeong2018masterless}.

In the distributed storage literature, \cite{dimakis2006decentralized} addresses the problem of decentralized encoding of a \textit{random} code. That is, distributed source processors pass on a random linear combination of packets to their neighboring parity nodes, resulting in an MDS code with high probability.
However, \cite{dimakis2006decentralized} did not study the decentralized encoding problem any further.

\section{General framework}\label{section:framework}
 
The presented solution to the decentralized encoding problem is based on three collective operations, two existing ones and a recent one defined in~\cite{wang2022all}.
They serve as basic building blocks and their definitions are given as follows.

\begin{definition}\textbf{(One-to-all broadcast)}~\cite[Ch.~13.1]{sanders2019sequential}
Consider a distributed system with~$K$ processors.
Initially, the sole source processor~$\rmP_0$ possesses some data packet~$x_0$.
At the termination of the operation, every processor~$\mathrm{P}_k$ obtains~$x_0$, for~$k\in[0,K-1]\triangleq\{ 0,1,\ldots,K-1\}$.
\end{definition}

\begin{definition}\textbf{(All-to-one reduce)}~\cite[Ch.~13.2]{sanders2019sequential}\label{definition:all2onereduce}
Consider a distributed system with~$K$ processors.
Initially, every processor~$\rmP_k$ possesses a data packet~$x_k$, for~$k\in[0,K-1]\triangleq\{ 0,1,\ldots,K-1\}$.
At the termination of the operation, the sole sink processor~$\rmP_0$ obtains~$x_0\oplus x_1\oplus\cdots\oplus x_{K-1}$, i.e., the accumulation of all data defined by some commutative and associative operator~$\oplus$.
\end{definition}

As the name suggests, the one-to-all broadcast operation employs a single source processor to share its data with every other processor in the system.
Meanwhile, the all-to-one reduce operation allows a sink processor to learn the reduction, e.g., the sum of the data from all processors.
Note that one of them is the dual of the other, i.e., a procedure that completes broadcast can be utilized to complete reduce by reversing the order of communication.

\begin{figure}
    \centering
    \scalebox{.52}{\input{figures/example-universal}}
    \caption{An example of completing all-to-all encode for any~$\bfC\in\bbF_q^{4\times 4}$ in~$2$ rounds in a system with $p=1$ ports, in which indices of processors are computed modulo~$4$.
    Every processor~$k\in[0,3]$ requires~$\coded{x}_k=\bfC_{0k}x_0+\bfC_{1k}x_1+\bfC_{2k}x_2+\bfC_{3k}\bfx_3$.
    In the first round, it receives~$x_{k-1}$ from processor~$\rmP_{k-1}$.
    In the second round, it receives~$\bfC_{(k-2)k}x_{k-2}+\bfC_{(k-3)k}x_{k-3}$ from processor~$\rmP_{k-2}$. 
    Finally, processor~$\rmP_k$ combines the~$2$ received data with~$\bfC_{kk}x_k$ and obtains~$\coded{x}_k$.}\label{figure:universal}
\end{figure}

\begin{definition}
\textbf{(All-to-all encode)}~\cite[Definition 1]{wang2022all}
Consider a distributed system with~$K$ processors.
Every processor~$\rmP_k$ for~$k\in[0,K-1]$ possesses~\emph{initial data}~$x_k\in\bbF_q$, and obtains a coded data~$\coded{x}_k\in\bbF_q$ at the termination of the operation.
For a matrix~$\bfC\in\bbF_q^{K\times K}$ that is known a priori to all processors, the coded data satisfies
\begin{equation*}
 \left(\coded{x}_0,\ldots,\coded{x}_{K-1}\right) = ({x_0},\ldots,{x_{K-1}})\cdot \bfC.
\end{equation*}
\end{definition}

That is, all-to-all encode does not differentiate between sources and sinks, as every processor simultaneously serves both roles.
In particular, the all-to-all encode operation concerns a scenario in which every processor has its own data treated as an element of some finite field~$\bbF_q$, and requires a linear combination of all other vectors in the system.
An algorithm which successfully completes all-to-all encode for a given~$\bfC$ is said to \emph{compute~$\bfC$} (see Figure~\ref{figure:universal} for an illustrative example, and note that the operation is not limited to the case where~$K=4$ and~$p=1$).

\begin{table*}[]
\centering
\resizebox{\textwidth}{!}{%
\begin{tabular}{|c|cl|c|}
\hline
                                          & \multicolumn{2}{c|}{Communication Cost}                                                                          & Note    \\ \hline
\textit{Universal Algorithm.}             & \multicolumn{2}{c|}{$C_\text{A2A,Univ}(K)=\alpha\cdot \ceil{\log_{p+1}K}+ \beta \log_2q \cdot{2\sqrt{K}}/{p}$}   & Theorem~\ref{theorem:universal-c} \\ \hline
\textit{Specific Algorithm (DFT)}         & \multicolumn{2}{c|}{$C_\text{A2A,DFT}(K)=H\cdot C_\text{A2A,Univ}(P)$, where $K=P^H$}                            & Theorem~\ref{theorem:dft-cost} \\ \hline
\textit{Specific Algorithm (Vandermonde)} & \multicolumn{2}{c|}{$C_\text{A2A,Vand}(K)= C_\text{A2A,DFT}(P)+C_\text{A2A,Univ}(M)$, where $K=M\cdot P^H$} & Theorem~\ref{theorem:Vandermonde} \\ \hline
\end{tabular}%
}
\caption{Costs of the all-to-all schemes.}
\label{tab:costs}
\end{table*}

Unlike broadcast and reduce, all-to-all encode is a newly defined collective operation that has not yet been studied.
An efficient algorithm that implements all-to-all encode for every possible matrix is provided in Section~\ref{section:universal}, which is then optimized for specific matrices.
Their communication costs can be found in Table~\ref{tab:costs}.

\begin{remark}
    The all-to-all encode operation considers that every processor holds an element in some finite field~$\bbF_q$, instead of a vector of such elements.
    Nevertheless, an all-to-all encode algorithm can be applied in the latter case by regarding data vectors in~$\bbF_q^W$ as an element in the extension field~$\bbF_{q^W}$, while keeping the coding matrix~$\bfC$ in the base field~$\bbF_q$.
    This simple extension does not introduce extra rounds ($C_1)$, but incurs~$W$ times more communications in terms of~$C_2$.
    For a thorough introduction on the relationship between the base field~$\bbF_q$ and its extension field~$\bbF_{q^W}$, see~\cite[Chapter 3]{roth2006coding}.
\end{remark}

With these three building blocks defined, we proceed to introduce the general framework of the presented solution to the decentralized encoding problem.
An extension to non-systematic codes is given in Appendix~\ref{appendix:non-sys}.

\begin{remark}
    This section focuses on a framework for decentralized encoding, and considers the three operations as modular components, whose implementations may vary to suit different needs.
    However, the implementations of these operations significantly affect the communication costs in solving the decentralized encoding problem.
    
    The broadcast and the reduce operations have been well-studied in the literature (recall that one is the dual of the other and hence the implementation of broadcast can be used in reduce).
    A detailed discussion on the implementation details of these two operations is given in Appendix~\ref{app:broadcast}.

    In later sections, we provide implementations of the all-to-all encode operation for different matrices.
    In Section~\ref{section:universal}, we introduce an optimal universal algorithm that completes it for any square matrix~$\bfC$, so that we have a procedure for decentralized encoding for any systematic linear code.
    In Section~\ref{section:vandermonde}, we provide an algorithm that implements all-to-all encode specifically for Vandermonde matrices which outperforms the universal one in communication costs; it is then employed in Section~\ref{section:reedsolo} to compute Cauchy-like matrices, providing a decentralized encoding algorithm tailored on systematic Reed-Solomon codes, and to compute Lagrange matrices, providing a decentralized algorithm for Lagrange codes.
\end{remark}

\subsection{The case of \texorpdfstring{$K\geq R$}{K>=R}}

We first consider the case where~$K=R\cdot M$ for some integer~$M$.
The matrix~$\bfA\in\bbF_q^{K\times R}$ is treated as a stack of square matrices~$\bfA_0,\ldots,\bfA_{M-1}\in\bbF_q^{R\times R}$, i.e.,
\small
\begin{equation}\label{eq:submatrix-of-A}
\bfG=\left[\begin{array}{ccc|ccc} 
  &   &    &  &    &      \\
  & \mbox{\normalfont\LARGE\bfseries I} &    &    &   \mbox{\normalfont\LARGE\bfseries A}   \\
  &   &    &  &    &      \\
\end{array}\right]=
\left[\begin{array}{ccc|c}
  &   &    &  \bfA_0  \\
  & \mbox{\normalfont\LARGE\bfseries I} &    & \vdots \\
  &   &    & \bfA_{M-1} \\
\end{array}\right].
\end{equation}
\normalsize

For convenience of indexing, we organize the~$K$ sources in an~$R\times M$ grid.
For index~$k=r+m\cdot R$ in which~$r\in[0,R-1]$ and~$m\in[0,M-1]$, processor~$\rmS_k$ is renamed to~$\rmS_{r,m}$ and placed in row~$r$ and column~$m$.
Based on this grid, the proposed framework involves two phases, one with column-wise operations followed by another with row-wise operations, described as follows.

In phase one, we initiate~$M$ instances of column-wise all-to-all encode operating in parallel, i.e., for~$m\in[0,M-1]$, processors in column~$m$ perform an all-to-all encode algorithm to compute the matrix~$\bfA_m$.
By the definition of all-to-all encode, processor~$\rmS_{r,m}$ obtains
\begin{equation*}
    \coded{\bfx}_{r,m}^\intercal\triangleq(\bfx_{m\cdot R}^\intercal,\bfx_{m\cdot R+1}^\intercal,\ldots,\bfx_{(m+1)\cdot R-1}^\intercal)\cdot (\bfA_m)_r,    
\end{equation*}
where~$(\bfA_m)_r$ is the~$r$-th column of~$\bfA_m$.
Clearly, the coded packet~$\coded{\bfx}_r=\sum_{m\in[0,M-1]}\coded{\bfx}_{r,m}$ required by sink processor~$\rmT_r$ is the sum of
\begin{equation*}
    \coded{\bfx}_{r,0},\coded{\bfx}_{r,1},\ldots,\coded{\bfx}_{r,M-1},
\end{equation*}
i.e., the (partially) coded data held by all processors in row~$r$.
As a result, phase two of the framework involves~$R$ instances of row-wise all-to-one reduce, and after each one, the partially coded packets are accumulated in the sink.

\begin{figure}
    \centering
\scalebox{0.62}{\input{figures/example-framework-1}}
    \caption{An example of completing decentralized encoding for matrix~$\bfA$ when~$p=1$,~$K=25$ and~$R=4$.
    In phase one, source processors are placed in a~$4\times7$ grid, and sink processors~$\rmT_1,\rmT_2$ and~$\rmT_3$ are borrowed to complete its last column, while acting as source processors which hold the zero packet.
    Processors in each  column perform all-to-all encode for a square submatrix of~$\bfA$, and obtain (partially) coded packets.
    In phase two, processors in each row perform all-to-one reduce, and accumulate the partial packets to the rightmost processor.}
    \label{fig:framework-1}
\end{figure}

In the case where~$R\nmid K$, we let~$M=\ceil{K/R}, L=K\bmod R$, and~$\bfB$ be an~\emph{arbitrary} matrix in~$\bbF_q^{L\times K}$.
We create~$\bfA'\in\bbF_q^{RM\times R}$ by placing~$\bfB$ at the bottom of~$\bfA$, and treat~$\bfA'$ as a stack of~$\bfA_0,\ldots,\bfA_{M-1}\in\bbF^{R\times R}$.
Moreover, we ``borrow''~$L$ sink processors to complete the last column of the grid, and each of them holds an empty ($\mathbf{0}$ in~$\bbF_q^W$) vector.
As in the case of~$R|K$, processors in each column perform all-to-all encode and obtain partially coded data.
In the second phase, the partially coded data are accumulated in the rightmost processors (sinks) after row-wise all-to-one reduce operations, and hence completes decentralized encoding.
Note that the choice of matrix~$\bfB$ does~\emph{not} affect the result since the ``borrowed'' processors hold empty data.
For an illustrative example, see Figure~\ref{fig:framework-1}.

In order to combine the two cases $R\vert K$ and~$R\nmid K$, let~$M=\ceil{K/R}$.
The communication cost of phase one depends on the underlying procedure that implements all-to-all encode, and for phase two, it depends on the algorithm that implements reduce.
This yields the following theorem, in which~$C_\text{A2A}(\bfA_m)$ measures the communication cost of all-to-all encode for square matrix~$\bfA_m$, and $C_\text{BR}(N,W)$ measures the communication cost of broadcasting a vector of length~$W$ in a system of~$N$ processors.
\begin{theorem}
    When~$K\geq R$, the proposed framework completes decentralized encoding with communication cost
    \begin{equation*}
        C= \max\left[ {C_\text{A2A}(\bfA_0),\ldots,C_\text{A2A}(\bfA_{M-1})}\right]+C_\text{BR}\left( \ceil{\textstyle\frac{K}{R}},W\right).
    \end{equation*}
\end{theorem}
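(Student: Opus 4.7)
The plan is to bound the communication cost of the two phases separately and then add them. Since the framework executes the column-wise all-to-all encode instances entirely before beginning the row-wise all-to-one reduce instances, and both phases share the additive linear cost model, additivity of $C_1$ (rounds) and $C_2$ (symbols on the critical path) over sequential phases will give the claimed bound.

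For Phase I, I would argue that the $M$ column-wise all-to-all encode instances can be run in parallel without contention. The key observation is that the columns of the grid partition the processor set: processor $\rmS_{r,m}$ (possibly a borrowed sink in the last column) participates only in the all-to-all encode for column $m$. Because every pair of processors are connected by a direct link and each processor has its own $p$ ports, the message schedules of the $M$ instances use disjoint ports and disjoint endpoints, so they may be overlaid round-by-round. Hence the number of rounds for Phase I is the maximum of the individual rounds, and similarly the largest message transmitted in round $t$ of the parallel execution is the maximum over the $M$ instances. Combining via the cost formula $C = \alpha C_1 + (\beta\lceil\log_2 q\rceil)C_2$ gives a total Phase I cost of $\max_{m\in[0,M-1]} C_{\text{A2A}}(\bfA_m)$.

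For Phase II, I would invoke the same disjointness argument applied to the $R$ row-wise all-to-one reduce operations: each row consists of $M=\lceil K/R\rceil$ distinct processors (the rightmost being the sink $\rmT_r$), and rows partition the grid, so the $R$ reduces run in parallel at the cost of a single reduce on $M$ processors. Since all-to-one reduce is the dual of one-to-all broadcast (obtained by reversing the order of communication and replacing sends with additive-combine receives, as noted in the paragraph after Definition~\ref{definition:all2onereduce}), its cost equals that of broadcasting a vector of length $W$ among $M$ processors, namely $C_{\text{BR}}(\lceil K/R\rceil, W)$. Correctness of the accumulation then follows from the linearity observation made just before the theorem: $\coded{\bfx}_r=\sum_{m=0}^{M-1}\coded{\bfx}_{r,m}$, and the ``borrowed'' sink contributions vanish because they hold the zero vector, so the choice of the padding matrix $\bfB$ is irrelevant.

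Adding the two phase costs yields the claimed bound. The only mildly subtle point—and the step I would write out most carefully—is justifying that the parallel execution of the $M$ column instances (respectively $R$ row instances) truly incurs only the maximum cost rather than the sum: this requires checking that the fully-connected, multi-port model lets us overlay the schedules without exceeding any port's one-message-per-round capacity, which follows from the vertex-disjointness of columns (resp.\ rows) in the grid.
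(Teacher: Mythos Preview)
Your proposal is correct and mirrors the paper's own reasoning, which is not presented as a separate proof block but is simply the framework description preceding the theorem: Phase~I consists of~$M$ disjoint column-wise all-to-all encodes whose cost is the maximum over the instances, and Phase~II consists of~$R$ disjoint row-wise reduces whose cost equals that of a broadcast on~$\lceil K/R\rceil$ processors by duality. If anything, your write-up is more careful than the paper's, since you explicitly justify why overlaying the parallel instances costs only the maximum (vertex-disjointness of columns/rows in the fully-connected multi-port model), a point the paper leaves implicit.
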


\subsection{The case of \texorpdfstring{$K< R$}{K<R}}

We first consider the case where~$R=K\cdot M$ for some integer~$M$.
The matrix~$\bfA\in\bbF_q^{K\times R}$ is treated as a concatenation of square matrices~$\bfA_0,\ldots,\bfA_{M-1}\in\bbF_q^{K\times K}$, i.e.,
\begin{equation}\label{eq:submatrix-of-A-horizontal}
\bfG=\left[\begin{array}{c|c} \bfI & \bfA\end{array}\right]
=\left[\begin{array}{c|ccc}  \bfI &  \bfA_0 & \cdots & \bfA_{M-1}\\\end{array}\right].
\end{equation}
 
We organize the~$R$ sinks in an~$K\times M$ grid in which for~$r=k+m\cdot K$, processor~$\rmT_r$, now renamed as $\rmT_{k,m}$, is placed in row~$k\in[0,K-1]$ and column~$m\in[0,M-1]$.
For ease of demonstration, the source processors are appended as column~$M$ of the processor grid. 

In phase one, we initiate~$K$ instances of one-to-all broadcast in parallel, which allows each source to share its initial data with other processors in the same row, i.e., for every~$k\in[0,K-1]$,~$\bfx_k$ is disseminated to processor~$\rmT_{k,m}$ for every~$m\in[0,M-1]$.
In phase two, sink processors in column~$m$ perform all-to-all encode for computing the matrix~$\bfA_m$.
After that, every sink processor~$\rmS_{k,m}$ obtains~$\coded{\bfx}_k$, i.e., the linear combination of~$\bfx_0,\ldots,\bfx_{K-1}$ defined by column~$(k+K\cdot m)$ of~$\bfA$, which is also the column~$k$ of~$\bfA_m$.

\begin{figure}
    \centering
    \scalebox{0.62}{\input{figures/example-framework-2}}
    \caption{An example of completing decentralized encoding for matrix~$\bfA$ when~$p=1$,~$K=4$ and~$R=25$.
    In phase one, sink processors are placed in a~$4\times7$ grid, and sources (as placed in the rightmost column) broadcast their initial data to others in the same row. 
    In phase two, source processors~$\rmS_1,\rmS_2$ and~$\rmS_3$ are borrowed to complete column~$6$ of the grid, and hence every processor in column~$m\in[0,6]$ holds~$\bfx_m$.
    Next, processors in each column perform all-to-all encode for a square submatrix of~$\bfA$ using the data received in the previous phase, and complete decentralized encoding.}
    \label{fig:framework-2}
\end{figure}

In the case where~$K\nmid R$, we let~$M=\ceil{R/K}$ and~$L=R\mod K$.
As in the case of~$K\vert R$, phase one involves~$K$ row-wise one-to-all broadcasts.
However, the processor grid has~$L$ (unfilled) rows of~$M-1$ processors, and they may finish earlier than others.
Nevertheless, we enable them to wait for the rest processors to finish this phase.

In phase two, we~``borrow'' source~$\rmS_k$ in order to complete row~$k$, if it is unfilled and contains only~$M-1$ processors.
This makes a full~$K\times M$ grid in which every processor in row~$k$ has~$\bfx_k$.
Moreover, we append an~\emph{arbitrary} matrix~$\bfB\in\bbF_q^{K\cdot L}$ to the right of~$\bfA$.
This results in~$\bfG'=[\bfI\mid \bfA']$ for~$\bfA'\in\bbF_q^{K\times R}$, and we refer to~$\bfA'$ as a concatenation of~$\bfA_0,\ldots,\bfA_{M-1}\in\bbF_q^{R\times R}$.
Finally, every column~$m$ performs all-to-all encode for matrix~$\bfA_m$, and concludes decentralized encoding.
Note that the choice of~$\bfB$ does not affect the decentralized encoding result, as it determines the coded packets required by the ``borrowed'' processors, which are essentially source processors that do not require any encoded packet.
For an illustrative example, see Figure~\ref{fig:framework-2}.

In order to combine the two cases $R\vert K$ and~$R\nmid K$, let~$M=\ceil{R/K}$.
Phase one is solely broadcast, and for phase two, the communication cost depends on the all-to-all encode algorithm.
This yields the following theorem.
\begin{theorem}
    When~$K<R$, the proposed framework completes decentralized encoding with communication cost
    \begin{equation*}
        C= \max\left[ {C_\text{A2A}(\bfA_0),\ldots,C_\text{A2A}(\bfA_{M-1})}\right]
        +C_\text{BR}\left( \ceil{\textstyle\frac{R}{K}},W\right).
    \end{equation*}
\end{theorem}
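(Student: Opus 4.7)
The plan is to establish this result by verifying correctness of the two--phase construction and then bounding the communication cost phase by phase, appealing to the earlier description of the framework.

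First, I would verify correctness. After Phase I, for each $k \in [0,K-1]$, every processor in row $k$ of the grid holds $\bfx_k$ (the borrowed sources in the partial rows guarantee that this also holds after ``filling in'' unfilled rows at the beginning of Phase II). Now consider the column-$m$ all-to-all encode with matrix $\bfA_m \in \bbF_q^{K \times K}$: each processor in that column holds, as its initial data, the vector $\bfx_k$ associated with its row. By the definition of all-to-all encode (applied over the extension field $\bbF_{q^W}$), processor $\rmT_{k,m}$ obtains the linear combination of $(\bfx_0,\ldots,\bfx_{K-1})$ specified by the $k$-th column of $\bfA_m$. Since $\bfA_m$ is the $m$-th block in the horizontal partition of $\bfA'$, the $k$-th column of $\bfA_m$ is precisely the $(k + mK)$-th column of $\bfA$. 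For $r = k + mK < R$ this is exactly $\coded{\bfx}_r$, as required; for $r \geq R$ the sink is a borrowed source whose output is inconsequential (it corresponds to the arbitrary matrix $\bfB$). Hence decentralized encoding is completed.

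Next, I would bound the cost of Phase I. The $K$ broadcast instances operate on pairwise disjoint sets of processors (distinct rows) and hence do not compete for ports. Each instance broadcasts a vector of length $W$ within a sub-network of at most $\ceil{R/K}$ processors (counting the source), so the latest-finishing instance determines the cost, which equals $C_\text{BR}(\ceil{R/K}, W)$. Processors in the shorter rows simply idle until the longest instance terminates, so the phase synchronizes cleanly.

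Then I would bound the cost of Phase II. The $M$ all-to-all encode instances likewise operate on pairwise disjoint processor sets (distinct columns of the grid), so they can run fully in parallel without port contention. The cost of column $m$ is $C_\text{A2A}(\bfA_m)$, and the phase ends once the slowest instance terminates, contributing $\max_{m \in [0,M-1]} C_\text{A2A}(\bfA_m)$. Since Phase I must complete before Phase II begins (Phase II uses the data disseminated in Phase I), the total communication cost is the sum of the two bounds, yielding the claimed expression.

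The main obstacle, and the only subtle point, is handling the irregularity when $K \nmid R$: I must argue that (i) idle waiting in the unfilled rows during Phase I does not affect the $C_\text{BR}(\ceil{R/K},W)$ bound, and (ii) borrowing source processors in Phase II is well-defined and does not alter the encoded outputs required by the genuine sinks. Both reduce to observing that the borrowed processors carry data already held locally and that the appended matrix $\bfB$ only influences outputs at borrowed positions, which are discarded.
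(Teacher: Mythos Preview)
Your proposal is correct and mirrors the paper's own justification: the paper does not give a separate formal proof of this theorem but simply states it immediately after describing the two-phase (broadcast then all-to-all encode) framework, and your argument spells out exactly those implicit steps---correctness of the composition, parallelism within each phase (disjoint rows/columns), and sequentiality between phases. Your treatment of the $K\nmid R$ case via borrowed sources and the arbitrary padding matrix~$\bfB$ also matches the paper's handling.
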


\begin{remark}
    For ease of demonstration, we assume that~$K\vert R$ or~$R\vert K$ in the rest of this paper.
\end{remark}
 
\section{Universal all-to-all encode algorithm}\label{section:universal}
Recall that an algorithm needs to determine which processor communicates with which other processors at each round (scheduling), and how the sent messages are obtained as linear combinations of previously received data (coding scheme).
A universal algorithm is one which completes an all-to-all encode operation for any square matrix~$\bfC\in\bbF_q^{K\times K}$ by only varying the coding scheme, with the scheduling fixed.

Universal algorithms are generally preferred in the scenarios where communications between processors need to be scheduled prior to knowing~$\bfC$, or are utilized in consecutive computations of different coding matrices.
In addition, in the case where the~$\bfA_0,\ldots,\bfA_{M-1}$ as defined in~\eqref{eq:submatrix-of-A} or~\eqref{eq:submatrix-of-A-horizontal} do not share a common structure (e.g,. Vandermonde or Cauchy), a universal algorithm may be chosen for the simplicity or uniformity of system design.

\subsection{Lower bounds}
We first propose the lower bounds for~$C_1$ and~$C_2$, which apply to any universal algorithm.
\begin{lemma}\label{lemma:universalC1bound}
Any universal algorithm has~$C_1\geq \ceil{\log_{p+1}K}$.
\end{lemma}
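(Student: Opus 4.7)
The plan is to argue via a standard information-dissemination bound, leveraging the definition of universality: the scheduling of a universal algorithm is identical across all choices of the coding matrix $\bfC$, so it must in particular work for a $\bfC$ whose every entry is nonzero. For such a $\bfC$, every processor $\rmP_k$ must terminate holding a coded packet $\coded{x}_k = \sum_j \bfC_{jk}\, x_j$ that genuinely depends on every one of $x_0, \ldots, x_{K-1}$.

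To quantify dependence, for each processor $\rmP_k$ and each round~$t$, let its \emph{knowledge set} $I_k^{(t)} \subseteq [0,K-1]$ consist of those source indices $j$ such that the current data held by $\rmP_k$ is a linear combination potentially involving $x_j$. Initially $I_k^{(0)} = \{k\}$. In any single round, $\rmP_k$ receives at most $p$ messages (one through each of its ports), and each such message is itself a linear combination formed by its sender out of that sender's knowledge in the previous round. Denoting those senders by $\rmP_{k_1}, \ldots, \rmP_{k_p}$, it follows that
\begin{equation*}
I_k^{(t+1)} \;\subseteq\; I_k^{(t)} \,\cup\, I_{k_1}^{(t)} \,\cup\, \cdots \,\cup\, I_{k_p}^{(t)},
\end{equation*}
so $|I_k^{(t+1)}| \leq (p+1)\cdot \max_{j}|I_j^{(t)}|$. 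A simple induction on $t$, starting from $|I_k^{(0)}|=1$, gives $|I_k^{(t)}| \leq (p+1)^t$ for every~$k$ and every~$t$.

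It remains to exhibit a coding matrix that forces every knowledge set to become all of $[0,K-1]$ by termination. Any $\bfC \in \bbF_q^{K\times K}$ all of whose entries are nonzero suffices, for instance the all-ones matrix. For such a $\bfC$, the coded packet $\coded{x}_k$ depends on every $x_j$, so at termination we must have $I_k^{(C_1)} = [0,K-1]$, which forces $(p+1)^{C_1} \geq K$; since $C_1$ is an integer, the claimed bound $C_1 \geq \lceil \log_{p+1} K \rceil$ follows.

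The only delicate point is making precise that a universal algorithm's scheduling is oblivious to $\bfC$, so that we may plug in an adversarially chosen coding matrix while keeping the communication pattern fixed; this is immediate from the definition of universality given in the introduction. The rest is a straightforward counting argument, and I do not anticipate a substantive obstacle.
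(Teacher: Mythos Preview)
Your argument is correct and is essentially the same as the paper's: both exploit universality to plug in a matrix with no zero entries, then apply the standard $(p+1)^t$ information-dissemination bound. The only cosmetic difference is that the paper tracks how many processors a single packet~$x_k$ can \emph{reach} after~$t$ rounds, whereas you track the dual quantity, how many source indices a single processor's state can \emph{depend on}; both viewpoints yield the same inequality $(p+1)^{C_1}\ge K$.
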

\begin{proof}
Since a universal algorithm must apply for all~$K\times K$ matrices, it must also apply for matrices with no zero entries, i.e., where every processor requires a linear combination of all initial data.
Similar to~\cite[Proposition 2.1]{bruck1997efficient}, a data packet~$x_k$ can reach at most~$(p+1)^t$ processors in the~$p$-port model after round~$t$. 
Hence, it takes at least~$\ceil{\log_{p+1}K}$ rounds for any data~$x_k$ to reach all processors.
\end{proof}

\begin{lemma}\label{lemma:universalC2bound}
Any universal algorithm has~$C_2\geq \frac{\sqrt{2K}}{p}-O(1)$.
\end{lemma}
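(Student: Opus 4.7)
My plan is to follow the style of Lemma~\ref{lemma:universalC1bound}, tracking the information a single processor can accumulate across rounds. Define $n_t$ as the maximum, over all processors, of the dimension of the span (in $\bbF_q^K$) of the data held by that processor at the end of round~$t$. Clearly $n_0 = 1$, since every processor begins with a single data symbol.

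Two constraints govern the growth of $n_t$. Since a message in round~$t$ consists of linear combinations of the sender's own data, it can contribute at most $n_{t-1}$ linearly independent packets; hence $m_t \le n_{t-1}$. Since each processor has $p$ ports and each port receives at most one message of size at most $m_t$, we also have $n_t \le n_{t-1} + p\,m_t$. Combined, $n_t \le (p+1)\,n_{t-1}$, yielding the same exponential growth bound used in Lemma~\ref{lemma:universalC1bound}.

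For any universal algorithm, an argument in the spirit of Lemma~\ref{lemma:universalC1bound}---applied to a coding matrix~$\bfC$ whose entries are all nonzero, so that every processor genuinely requires contributions from every source---forces the maximum information subspace reachable at every processor to cover all of~$\bbF_q^K$. Equivalently, $n_{C_1}\ge K$. Telescoping the additive inequality $n_t \le n_{t-1}+p\,m_t$ then gives $n_{C_1}\le 1 + p\sum_{t=1}^{C_1} m_t$, so the universality requirement yields $C_2 = \sum_{t=1}^{C_1} m_t \ge (K-1)/p$, which already implies the weaker bound $C_2 \ge \sqrt{2K}/p - O(1)$ claimed in the lemma.

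The main obstacle is justifying the universality requirement rigorously: although the coding coefficients may be chosen as a function of~$\bfA$, the union (over all codings) of reachable subspaces at each processor must equal all of~$\bbF_q^K$, so an adversarial choice of~$\bfA$ must be handled with care. A direct dimension-count using $m_t\le n_{t-1}$, together with the observation that reaching $n_{C_1}\ge K$ via parsimonious message sizes forces the partial sums $\sum_{s\le t}m_s$ to grow roughly as $t^2$, produces the triangular-type relation between $C_1$ and $C_2$ from which the precise constant $\sqrt{2}$ in $\sqrt{2K}/p-O(1)$ is extracted.
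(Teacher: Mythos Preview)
Your argument has a genuine gap at the step ``$n_{C_1}\ge K$,'' and in fact this claim is false. You are conflating two different requirements: that the output $\coded{x}_k=\sum_j \bfC_{jk}x_j$ \emph{depends} on all~$K$ inputs (i.e., the coefficient vector has full support), and that processor~$\rmP_k$'s received data span all of~$\bbF_q^K$. The former only requires the single vector $(\bfC_{0k},\ldots,\bfC_{(K-1)k})$ to lie in the span of what~$\rmP_k$ has received; it does not force that span to be $K$-dimensional. Concretely, the paper's prepare-and-shoot algorithm is universal and achieves $C_2\approx 2\sqrt{K}/p$, so each processor receives only $\Theta(\sqrt{K})$ field elements in total across all ports and rounds; its span therefore has dimension $\Theta(\sqrt{K})\ll K$. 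Your telescoped bound $C_2\ge (K-1)/p$ would directly contradict this achievability, so the argument cannot be correct. Your final paragraph acknowledges the subtlety (the coding scheme varies with~$\bfC$, so one must control a \emph{union} of reachable subspaces rather than a single span), but the proposed fix via ``partial sums growing as $t^2$'' still rests on the false premise $n_{C_1}\ge K$ and does not go through.

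The paper takes a fundamentally different route: it first reduces to \emph{baseline} algorithms in which every message is a single field element (so $C_1=C_2$), arguing that any universal algorithm can be simulated by a baseline one with the same~$C_2$. It then performs a \emph{counting} argument: in a $T$-round baseline algorithm, the total number of possible coding schemes (choices of coefficients in every transmission and every final combination) is at most $q^{K(Tp+1)}\prod_{t=1}^{T}q^{(t-1)Kp^2}$, and since distinct matrices require distinct coding schemes, this must be at least $q^{K^2}$. Taking logarithms yields a quadratic inequality in~$T$ whose solution gives $T\ge \sqrt{2K}/p-O(1)$. The $\sqrt{K}$ (rather than~$K$) arises precisely because the number of coding choices grows like $q^{\Theta(T^2)}$, reflecting that in round~$t$ a processor has $\Theta(t)$ previously received packets to combine; this is the ``triangular'' structure you were reaching for, but it lives in the exponent of a counting bound, not in a span-dimension recursion.
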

\begin{proof}
We define~\emph{baseline algorithms} as a class of universal algorithms in which every processor passes exactly one field element in~$\bbF_q$ through each port during every round; clearly, all baseline algorithms have~$C_1=C_2$.

Given a universal algorithm in which during round~$t$, processors transfer at most~$d_t$ field elements through one of the ports (and thus~$C_2=\sum_t d_t$), there exists a corresponding baseline algorithm that ``simulates'' each round~$t$ in the universal algorithm using~$d_t$ rounds, and thus has the same~$C_2=\sum_t d_t$.
Therefore, a universal algorithm cannot outperform all baseline algorithms in terms of~$C_2$.
Hence, it suffices to bound~$C_1$, and hence~$C_2$, for baseline algorithms.

We provide a lower bound on~$C_1$ for baseline algorithms by counting possible~\emph{coding schemes}, i.e., the number of ways processors can linearly combine previously received packets during the algorithm. 
Notice that once the scheduling and coding scheme are fixed, an algorithm cannot compute two distinct matrices\footnote{Recall that an algorithm computes a matrix~$\bfC$ only if it computes the product~$(x_0,\ldots,x_{K-1})\cdot\bfC$ for every~$(x_0,\ldots,x_{K-1})\in\bbF_q^{K}$. 
Hence, if~$(x_0,\ldots,x_{K-1})\cdot\bfC=(x_0,\ldots,x_{K-1})\cdot\bfD$ for every~$(x_0,\ldots,x_{K-1})\in\bbF_q^K$, then~$\ker(\bfC-\bfD)=\bbF_q^{K}$, which implies~$\bfC=\bfD$.}. 
Therefore, this value must be greater than or equal to~$|\bbF_q^{K\times K}|=q^{K^2}$.

Assume a baseline algorithm with~$C_1=T$, and let~$\bfs_k^{(t)}$ be the vector of packets received by processor~$\rmP_k$ prior to the beginning of round~$t\in[T]$.
Clearly, we have~$|\bfs_k^{(t)}|=(t-1)p$.
In round~$t$, for every port~$\rho\in[0,p-1]$, the processor creates and sends a packet~$y_{k,\rho}^{(t)}$ by summing the initial packet~$\bfx_k$ with a linear combination of previously received packets\footnote{Since the receiver will linearly combine the received packets in future rounds of network coding, the initial packet~$x_k$ has coefficient~$1$ to avoid overcounting the coding schemes.}, i.e., 
$$
y_{k,p}^{(t)} = x_k+ (\bfg_k^{(t)})^\intercal \cdot\bfs_k^{(t)},
$$
where~$\bfg_k^{(t)}\in\bbF_q^{p(t-1)}$ is a \emph{coding vector} defining the linear combination of elements in~$\bfs_k^{(t)}$. 
The number of possible coding vectors in round~$t$ is then~$q^{p(t-1)}$. 

Note that there are~$K$ processors in the~$p$-port system, and the baseline algorithm contains~$T$ rounds.
Further, at the end of the~$T$-round baseline algorithm, processor~$\rmP_k$ obtains~$\coded{\bfx}_k$ by linearly combining the~$Tp+1$ received packets, including the initial~$\bfx_k$. 
Thus, the total number of coding schemes is
\begin{equation}\label{eq:inequality}
   \textstyle  q^{K(Tp+1)}\cdot \prod_{t\in[T]}q^{(t-1)Kp^2}\geq q^{K\cdot K}.
\end{equation}
Taking logarithms on both sides of~\eqref{eq:inequality}, we have
\begin{equation*}
   \textstyle K(Tp+1)+\sum_{t\in[T]} (t-1)Kp^2 \geq K^2,
\end{equation*}
which simplifies to~$p^2T^2-p(p-2)T + 2 (1-K) \geq 0$.
Hence,
\begin{equation*}
   \textstyle C_1=T\geq\frac{1}{2}-\frac{1}{p}+\sqrt{\frac{1}{4}-\frac{1}{p}-\frac{1}{p^2}+\frac{2K}{p^2}} =\frac{\sqrt{2K}}{p}-O(1).\qedhere
\end{equation*}
\end{proof}
\begin{remark}[Lower bound for specific algorithms]\label{remark:bounds4SpecificMatrices}
Clearly, specific algorithms can perform at least as well as universal ones in any figure of merit, since any universal algorithm subsumes specific algorithms for all matrices by definition. 
It is readily verified that any matrix which contains a non-zero row cannot be computed with~$C_1 < \log_{p+1}K$ and~$C_2 < \log_{p+1}K$; this is due to the fact a data packet cannot be disseminated to all~$K$ processors in less than this many rounds. 
\end{remark}

\subsection{Prepare-and-shoot: An optimal universal algorithm}

\begin{figure}[t]
    \centering
    \includegraphics[width=0.48\textwidth]{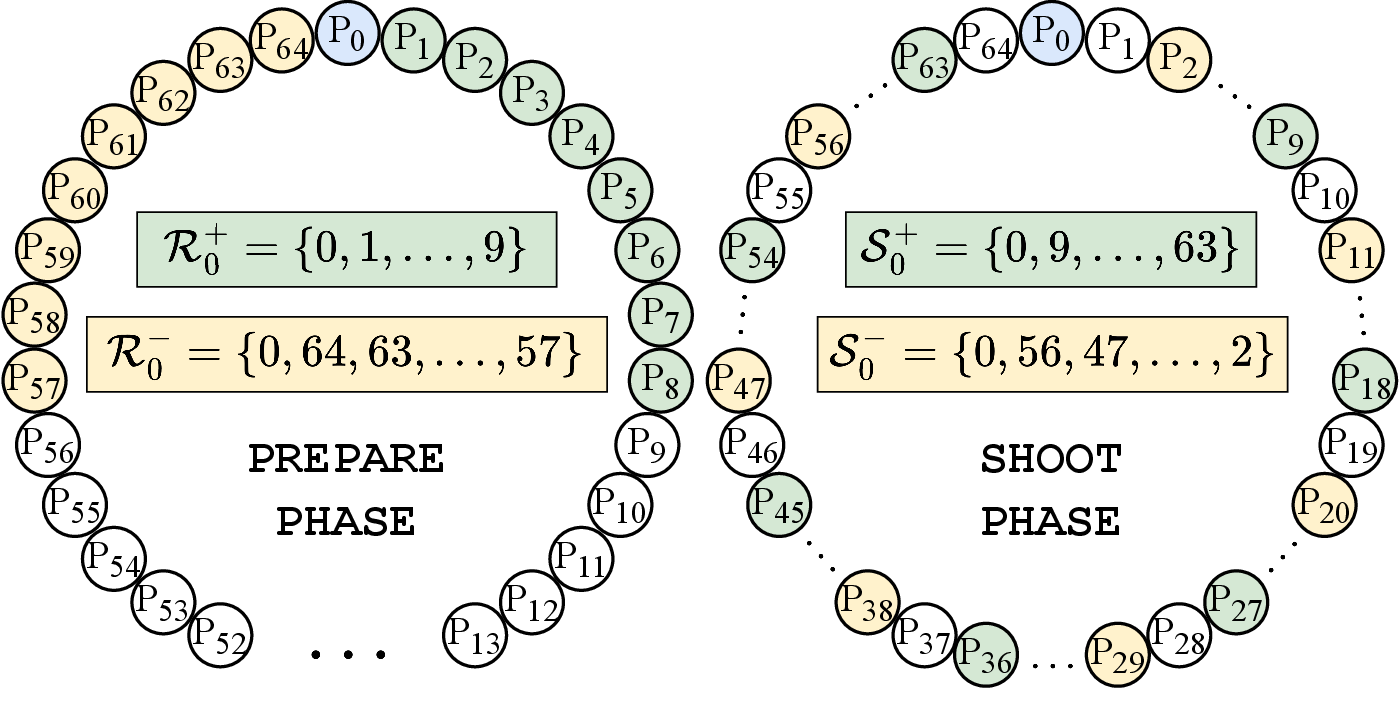}
    \caption{
    Overview of an all-to-all encode process with~$K=65$ and~$p=2$ from~$\rmP_0$'s perspective.
    During the prepare phase, processor~$\mathrm{P}_0$ obtains the initial packets from processors in~$\cR_0^-$, and disseminates its own initial packets to processors in~$\cR_0^+$.
    In the shoot phase,~$\rmP$ computes and disseminates partially coded packets to processors in~$\cS_0^+$ using the packets received, and obtains its coded packet by summing the partially coded packets received from processors in~$\cS_0^-$.}
    \label{fig:prepare-and-shoot}
\end{figure}

Inspired by the all-gather algorithm developed by Bruck \textit{et al.}~\cite{bruck1997efficient}, we present a universal algorithm consisting of two phases called~\emph{prepare} and~\emph{shoot}.

Let~$L=\ceil{\log_{p+1}K}$, which equals the total number of rounds required to complete the proposed algorithm.
If~$L$ is odd, then the prepare phase requires~$T_p=(L+1)/2$ rounds, and the shoot phase requires~$T_s=(L-1)/2$ rounds.
Otherwise,~$T_p=T_s=L/2$.

To describe the phases, let~$m=(p+1)^{T_p},n=\ceil{K/m}$, and hence~$(n-1)m<K\leq nm$. 
For~$k\in[0,K-1]$, let
\small
\begin{align*}
    \cR_k^+&=\{k+\ell\vert \ell\in[0,m-1]\},
    \cS^+_k=\{k+\ell\cdot m\vert \ell\in[0,n-1]\},\\
    \cR_k^-&=\{k-\ell\vert \ell\in[0,m-1]\},
    \cS^-_k=\{k-\ell\cdot m\vert \ell\in[0,n-1]\}.
\end{align*}
\normalsize
In a nutshell,~$\cR_k^+$ and~$\cR_k^-$ contain the indices of processors that processor~$\rmP_k$ communicates with in the prepare phase, while~$\cS^+_k$ and~$\cS^-_k$ contain the indices of processors that processor~$\rmP_k$ communicates with in the shoot phase, directly or indirectly.
For convenience of notation, indices of processors are computed~$\bmod K$.
An overview of the proposed algorithm with examples of these sets can be found in Figure~\ref{fig:prepare-and-shoot}.

\begin{figure}[t]
    \centering
    \includegraphics[width=0.48\textwidth]{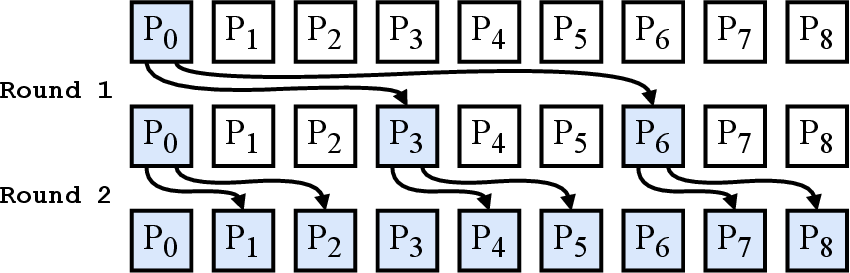}
    \caption{
    Illustration of the dissemination of~$\bfx_0$ in the prepare phase.
    In round one, processor~$0$ broadcasts~$\bfx_0$ to processor~$3$ and processor~$6$.
    In round two, processor~$0/3/6$ passes~$x_0$ to processor~$1/4/7$ and processor~$2/5/8$, and concludes the prepare phase.
    Note that this figure focuses on an individual broadcast operation, and there are~$K$ broadcasts happening in parallel.
    }
    \label{fig:prepare}
\end{figure}

\begin{algorithm}
\caption{Prepare Phase (for processor~$\rmP_k$)}\label{alg:pp}
\begin{algorithmic}[1]
    \State Let~$\texttt{memory}$ be an empty set of packets.
    \State $\texttt{memory}_k\gets\{x_k\}$\label{line:initial-mem}
    \For{$t\gets 1,2,\ldots,T_p$}
        \For{$\rho=1$ to~$p$}\Comment{As a sender}
            \State send~$\texttt{memory}_k$ to~$\rmP_{k+\rho\cdot (p+1)^{T_p-t}}$
        \EndFor
        \For{$\rho=1$ to~$p$}\Comment{As a receiver}
            \State \small receive~$\texttt{memory}_{k-\rho\cdot (p+1)^{T_p-t}}$ from~$\rmP_{k-\rho\cdot (p+1)^{T_p-t}}$\normalsize
            \State $\texttt{memory}_k\gets\texttt{memory}_k\cup\texttt{memory}_{k-\rho\cdot (p+1)^{T_p-t}}$
        \EndFor
    \EndFor
    \State \textbf{Output}~$\texttt{memory}$
\end{algorithmic}
\end{algorithm}

\noindent\textbf{Prepare phase:}
This phase consists of~$K$ one-to-$m$ broadcast operations, executed in parallel using the folklore~$(p+1)$-nomial tree structure.
Each individual broadcast can be interpreted recursively:
In the first round, the broadcaster sends its packet to~$p$ other processors, resulting in~$p+1$ processors possessing the packet. Consequently, after each round, the problem is divided into~$p+1$ smaller subproblems, which can be solved recursively in~$\ceil{\log_{p+1}K}$ rounds.


Let~$\texttt{memory}_k$ be the internal storage of processor~$\rmP_k$.
Initially,~$\texttt{memory}_k=\{x_k\}$ (line~\ref{line:initial-mem}).
In round~$t\in[1,T_p]$, for each~$k\in[0,K-1]$, processor~$\rmP_k$ forwards the entire content of its \texttt{memory}, which is a set consisting of all initial packets it possesses~$\texttt{memory}_k$ to
\begin{equation*}
    \rmP_{k+(p+1)^{T_p-t}},    \rmP_{k+2(p+1)^{T_p-t}},\ldots,    \rmP_{k+p(p+1)^{T_p-t}}.
\end{equation*}
Simultaneously, it receives messages from
\begin{equation*}
    \rmP_{k-(p+1)^{T_p-t}},    \rmP_{k-2(p+1)^{T_p-t}},\ldots,    \rmP_{k-p(p+1)^{T_p-t}},
\end{equation*}
and merges the contents with its own memory.
See Algorithm~\ref{alg:pp} for details and Figure~\ref{fig:prepare} for an illustrative example.

\begin{lemma}\label{lemma:prepare}
After the prepare phase, every processor~$\rmP_k$ has obtained~$\bfx_r$ for every~$r\in\cR_k^-$, with
\begin{equation*}
    \textstyle C_{1,\text{prep}}=T_p,C_{2,\text{prep}}= \frac{(p+1)^{T_p}-1}{p}.
\end{equation*}
\end{lemma}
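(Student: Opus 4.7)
The plan is to prove the lemma by induction on the round number, tracking exactly which initial packets each processor's memory contains after each round, and then reading off both communication costs from that description.

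First, I would establish the following invariant by induction on $t \in [0, T_p]$: after round $t$, the memory of processor $\rmP_k$ contains exactly the set of initial packets
\[
\Bigl\{\,\bfx_{k-\ell}\;\Big|\; \ell = \sum_{j=T_p-t}^{T_p-1}\rho_j(p+1)^{j},\ \rho_j\in[0,p]\,\Bigr\}.
\]
The base case $t=0$ is the initialization step in Algorithm~\ref{alg:pp}. For the inductive step, in round $t$, processor $\rmP_k$ receives the entire memory of each $\rmP_{k-\rho(p+1)^{T_p-t}}$ for $\rho\in[1,p]$, and merges them with its own. By the inductive hypothesis, each such sender's memory is the shift of the claimed set by $-\rho(p+1)^{T_p-t}$, so after merging, $\rmP_k$'s memory corresponds exactly to the sums where a new digit $\rho\in[0,p]$ in position $T_p-t$ is prepended.

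Next, I would specialize to $t=T_p$. By the uniqueness of the base-$(p+1)$ digit representation, the index offsets $\ell$ range exactly over the set $\{0,1,\ldots,(p+1)^{T_p}-1\}=[0,m-1]$, each appearing once. Hence $\rmP_k$'s memory is precisely $\{\bfx_{k-\ell}\mid \ell\in[0,m-1]\}=\{\bfx_r\mid r\in\cR_k^-\}$, as claimed. This also confirms that no duplicate packets are ever received during the phase---each index offset $\ell$ has a unique digit expansion, so the distinct senders forward disjoint contents in round $t$.

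Finally, the two communication costs follow. The number of rounds is $T_p$ by construction, giving $C_{1,\text{prep}}=T_p$. For $C_{2,\text{prep}}$, the invariant implies that at the start of round $t$, each memory has $(p+1)^{t-1}$ field elements, and this is exactly the size of every message sent during round $t$ (through any of the $p$ ports). Summing these sizes over $t=1,\ldots,T_p$ yields the geometric sum
\[
\sum_{t=1}^{T_p}(p+1)^{t-1}=\frac{(p+1)^{T_p}-1}{p},
\]
which is $C_{2,\text{prep}}$. The only subtle point in the whole argument is recognizing the base-$(p+1)$ structure so that the inductive hypothesis is stated tightly enough to guarantee that the union of received memories at each round covers a fresh ``digit slot'' without overlap; once that is in hand, both the correctness claim and the cost bounds follow with no further calculation.
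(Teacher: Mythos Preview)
Your proof is correct and follows essentially the same approach as the paper's: both arguments exploit the base-$(p+1)$ digit structure of the index offsets to show that after~$T_p$ rounds each processor's memory covers exactly~$\cR_k^-$, and both compute~$C_{2,\text{prep}}$ as the same geometric sum. The only cosmetic difference is that the paper tracks the dissemination of a single packet~$x_k$ via a broadcast tree (sender's viewpoint), whereas you track the contents of~$\rmP_k$'s memory round by round (receiver's viewpoint); these are dual formulations of the same induction.
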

\begin{proof}
Consider the following tree defined recursively, which contains every processor that has obtained~$x_k$: Initially,~$\rmP_k$ is added as the root. 
In every subsequent round~$t$, every processor~$\rmP_r$ in the current tree is connected to the processor with index~$r+\rho\cdot(p+1)^{T_p-t}$, for every~$\rho\in[0,p-1]$ (i.e., the processors receiving messages from processor~$\rmP_r$ in round~$t$). 
Obviously, the recursion is concluded after~$T_p$ rounds, and the tree contains all processors in~$\cR_k^+$.
That is, every processor in~$\cR_k^+$ obtains~$\bfx_k$ after~$T_p$ rounds.
Therefore, processor~$\rmP_k$ obtains~$\bfx_r$ for every~$r\in\cR^-_k$ after the shoot phase.

Further, as the number of field elements in a message increases by~$p$-fold after each round, we have
\begin{equation*}
  C_{2,\text{prep}}=\sum_{t=1}^{T_p}(p+1)^{t-1}=\frac{(p+1)^{T_p}-1}{p}.   \qedhere
\end{equation*}
\end{proof}

\begin{figure}[t]
    \centering
    \includegraphics[width=0.48\textwidth]{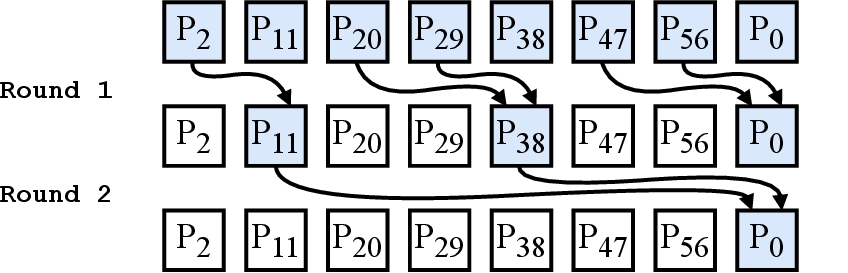}
    \caption{Illustration of the reduce of~$n=8$ packets, i.e.,~$w_{2,0},\ldots,w_{56,0},w_{0,0}$, to processor~$\rmP_k$. In every round, processors sum up the received packets and forward the result, and after the operation, processor~$\rmP_0$ obtains~$y_0=\sum_{s\in\cS_0^-}w_{s,0}$. Note that~$w_{2,0}=(\mathbf{0},x_{59},\ldots,x_{2},\mathbf{0}) \cdot \bfC_{0}$, and~$w_{0,0}=(\mathbf{0},x_{57},\ldots,x_{0},\mathbf{0}) \cdot \bfC_{0}$, processors~$\rmP_0$ computes the coded packet~$\coded{x}_0=y_0-(\mathbf{0},x_{59},\ldots,x_{0},\mathbf{0}) \cdot \bfC_{0}$.
    Note that this figure focuses on one individual reduce operation, and there are~$K$ reduces happening in parallel.
    }
    \label{fig:reduce}
\end{figure}

\noindent\textbf{Shoot phase:} This phase consists of~$K$ $n$-to-one reduce operations using the folklore~$(p+1)$-nomial tree structure happening in parallel.
Each instance of the reduce operation is intended to communicate the correct linear combination of packets to every processor.
As suggested in Section~\ref{section:framework}, the broadcast algorithm in the prepare phase can be utilized to complete reduce operations by reversing the order of communication in each round.

At the beginning of the phase, every~$\rmP_k$ defines~$n$ packets
$$
w_{k,k},w_{k,k+m},\ldots, w_{k,k+(n-1)m}
$$
using the information received in the prepare phase (i.e.,~$x_{r}$ for~$r\in\cR_k^-$), and the coefficients of the matrix~$\bfC$. 
Specifically, for every~$\ell\in[0,n-1]$, processor~$\rmP_k$ initializes
\small\begin{equation*}
    w_{k,k+\ell\cdot m}= (0,\ldots,0,x_{k-m+1},\ldots,x_{k},0,\ldots,0) \cdot \bfC_{k+\ell\cdot m},
\end{equation*}\normalsize
where~$\bfC_{k+\ell\cdot m}$ is the~$(k+\ell\cdot m)$-th column of~$\bfC$.
That is, for~$s\in\cS_k^+$ and~$r\in\cR_k^-$, the packet~$w_{k,s}$ contains a linear combination of $x_r$'s (initial packets that have been obtained by processor~$\rmP_k$) defined by the~$r$-th column of the matrix~$\bfC$.
At the end of this phase, every~$\rmP_k$ obtains    ~$y_k\triangleq\sum_{s\in\cS_k^-}w_{s,k}$.



\begin{remark}
    The packet~$\bfw_{k,s}$ can be viewed as a partially coded packet for~$\rmP_s$, and the objective of the shoot phase is to ensure that each~$\rmP_s$ obtains the sum of all~$\bfw_{k,s}$ with~$k \in \cS_s^-$.
    In other words, the processors in~$\{\rmP_k\}_{k \in \cS_s^-}$ collaboratively assist~$\rmP_s$ in recovering~$y_s$.

    In the special case where~$K = mn$, each set~$\cR_k^-$ consists of distinct indices, i.e., $\cR_k^- \cap \cR_{k'}^- = \emptyset$ for~$k \neq k'$ and~$k,k' \in [K]$, which directly leads to~$\coded{x}_s = y_s$.
    However, in the more general case where~$K < mn$, the sets~$\cR_k^-$ and~$\cR_{k - (n-1)m}^-$ share overlapping indices.
    We refer readers to~\eqref{eq:xkFromyk} for the procedure to compute~$\coded{x}_s$ from~$y_s$ under this setting.
\end{remark}


In round~$t\in[1,T_s]$ of the shoot phase, processor~$\rmP_k$ forwards a message to the processor indexed by~$k+\rho m^t$, and receives a message from the processor indexed by~$k-\rho m^t$, through every port~$\rho\in[p]$. 
To describe this process, we now define a series of trees for every processor in each round, and an illustrative example can be found in Figure~\ref{fig:reduce}.

For every round~$t$, let~$\cT_k^{(t)}$ be a tree defined recursively in~$T_s-t$ steps, as follows.
Initially, processor~$\rmP_k$ is added as the root. 
In every subsequent step~$\tau$, for~$\tau\in[1,T_s-t]$, every processor~$\rmP_r$ in the current tree is connected to processor indexed by~$r+\rho m^{t+\tau}$, for every~$\rho\in[p]$. 
In particular, the tree~$\cT_k^{(T_s)}$ contains only the root~$k$. 
Intuitively, the tree~$\cT_k^{(t)}$ contains processors that the processor~$\rmP_k$ is connecting to, directly or indirectly, in rounds following round~$t$. 

\begin{algorithm}[t]
\caption{Shoot Phase (for processor~$\rmP_k$)}\label{alg:sp}
\begin{algorithmic}[1]
    \State Initialize~$w_{k,k},w_{k,k+m},\ldots,w_{k,k+(n-1)m}$.
    \For{$t\gets 1,2,\ldots,T_s$}
        \For{$\rho=1$ to~$p$}\Comment{As a sender}
            \State send~$w_{k,r}$ to~$\rmP_{k+\rho m^t}$ for every~$r\in \cT^{(t)}_{k+\rho m^t}$
        \EndFor
        \For{$\rho=1$ to~$p$}\Comment{As a receiver}
            \For{$r\in \cT^{(t)}_k$}
                \State receive~$w_{{k-\rho m^t},r}$ from processor~$\rmP_{k-\rho m^t}$
                \State assign~$w_{k,r}\gets w_{k,r}+w_{{k-\rho m^t},r}$
            \EndFor
        \EndFor
    \EndFor
    \State \textbf{Output} $w_{k,k}$ as~$y_k$
\end{algorithmic}
\end{algorithm}

At round~$t$, processor~$\rmP_k$ sends a message to processor~$\rmP_{k+\rho m^\tau}$, and receives a message from processor~$\rmP_{k-\rho m^\tau}$ through the~$\rho$-th port for~$\rho\in[p]$.
The sent message contains~$w_{k,r}$ for every~$r\in\cT_{k+\rho m^\tau}^{(t)}$.
The received message contains~$w_{{k-\rho m^\tau},r}$ for every~$r\in\cT_k^{(t)}$, and processor~$\rmP_k$ updates its memory, letting~$w_{k,r}\gets w_{k,r}+w_{{k-\rho m^\tau},r}$. 
The details are given in Alg.~\ref{alg:sp}, whose correctness is given as follows.
\begin{lemma}\label{lemma:shoot}
After the shoot phase, every processor~$\rmP_k$ has obtained~$y_k\triangleq\sum_{s\in\cS_k^-}w_{s,k}$, with
\begin{equation*}
    C_{1,\text{shoot}}=T_s,C_{2,\text{shoot}}= \frac{(p+1)^{T_s}-1}{p}.
\end{equation*}
\end{lemma}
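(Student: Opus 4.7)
The plan is to mirror the tree-based argument used for the prepare phase in Lemma~\ref{lemma:prepare}, exploiting the fact that the shoot phase is structurally its dual: whereas prepare broadcasts each source out along a $p$-ary tree, shoot reduces into each sink along a $p$-ary tree. Accordingly, correctness follows from an inductive accumulation argument, and the cost analysis reduces to summing the sizes of the layers of these trees.

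First, let $w_{k,r}^{(t)}$ denote the value of the variable $w_{k,r}$ at the end of round $t$, with $w_{k,r}^{(0)}$ the initial value set at the start of the phase. I would prove by induction on $t$ the invariant that, for every $r \in \cT_k^{(t)}$,
\[
    w_{k,r}^{(t)} \;=\; \sum_{s \in \cA_k^{(t)}} w_{s,r}^{(0)},
\]
where $\cA_k^{(t)}$ is the ``ancestor set'' at $\rmP_k$ after round $t$, i.e., the set of processors whose initial contribution has been accumulated at $\rmP_k$. The base case $t=0$ is immediate from initialization. For the inductive step, in round $t+1$ processor $\rmP_k$ receives $w_{k - \rho m^{t+1}, r}^{(t)}$ through each of its $p$ ports (for exactly those $r$ that lie in both the sender's and receiver's trees, which is automatic from the recursive definition of $\cT_k^{(t)}$) and adds it to $w_{k,r}^{(t)}$. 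Applying the induction hypothesis to each sender then yields the set-level recursion $\cA_k^{(t+1)} = \cA_k^{(t)} \cup \bigcup_{\rho \in [p]} \cA_{k-\rho m^{t+1}}^{(t)}$.

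Next, I would unroll this recursion at $t = T_s$ and $r = k$. Since $\cT_k^{(T_s)} = \{k\}$, the invariant gives $w_{k,k}^{(T_s)} = \sum_{s \in \cA_k^{(T_s)}} w_{s,k}^{(0)}$. The key step is to identify $\cA_k^{(T_s)}$ with $\cS_k^-$. Unrolling produces a $p$-ary tree rooted at $\rmP_k$ whose ancestors are of the form $k - \sum_{t=1}^{T_s} \rho_t m^t$ (reading $m^t$ as the algorithm's round-$t$ offset) for independent choices $\rho_t \in \{0,1,\ldots,p\}$, giving $(p+1)^{T_s} = n$ ancestors. A short combinatorial verification, exploiting the geometric growth of the round-$t$ offsets relative to $m$, shows these $n$ ancestors are distinct and coincide with the arithmetic progression $\cS_k^- = \{k - \ell m : \ell \in [0, n-1]\}$. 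The identification then gives $w_{k,k}^{(T_s)} = y_k$ as required.

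Finally, the communication costs fall out of the same tree structure. By construction, the phase has $T_s$ rounds, so $C_{1,\text{shoot}} = T_s$. For $C_{2,\text{shoot}}$, the message transmitted in round $t$ carries one field element per node of the recipient's tree $\cT^{(t)}_{\text{out}}$; since each of the $T_s - t$ recursive expansion steps multiplies the tree's size by $p+1$, this tree has $(p+1)^{T_s - t}$ nodes. Summing over rounds yields $C_{2,\text{shoot}} = \sum_{t=1}^{T_s}(p+1)^{T_s - t} = \frac{(p+1)^{T_s} - 1}{p}$. The main obstacle is the combinatorial identification $\cA_k^{(T_s)} = \cS_k^-$: one must verify that the $n$ ancestors produced by the recursion are all distinct and exactly exhaust the claimed arithmetic progression, which hinges on how the round-$t$ offsets combine to represent each element of $\cS_k^-$ uniquely.
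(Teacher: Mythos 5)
Your proposal is correct and follows essentially the same route as the paper: the paper also argues via the recursively defined sender tree (its $\cT_k'$, your unrolled ancestor recursion), observes that this tree is exactly $\cS_k^-$, concludes that the partial sums $w_{s,k}$ accumulate into $w_{k,k}$ along the tree, and computes $C_{2,\text{shoot}}$ from $|\cT_k^{(t)}|=(p+1)^{T_s-t}$. Your round-indexed invariant is just a slightly more formal phrasing of the paper's leaf-to-root traversal, and both treatments leave the identification of the reachable ancestor set with $\cS_k^-$ (including distinctness) at the same level of asserted verification.
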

\begin{proof}
We show that the variable~$w_{k,k}$ stores~$y_k$ for every~$k$ at the end of the algorithm.
The proof is based on an recursively defined tree~$\cT_k'$: in step~$\tau=0$ processor~$\rmP_k$ is added as the root. In every subsequent step~$\tau\ge 1$ every processor~$\rmP_r$ in the current tree is connected to processor~$r-\rho m^\tau$, for every~$\rho\in[p]$.
That is, during each step~$\tau$, we add processors that \textit{sent} messages to the existing processors at round~$t=T_s-\tau+1$.

Observe that the tree~$\cT_k'$ contains all processors in~$\cS_k^-$ after~$T_s$ steps of the recursion.
For each processor~$r\in\cT_k'$ added in step~$\tau$, processor~$\rmP_k$ is present in the tree~$\cT_r^{(t)}$.
Hence, at round~$t$, the packet~$w_{r,k}$ is transmitted from processor~$\rmP_r$ to its parent~$s$ in~$\cT_k'$, and summed with~$w_{s,k}$. 

Traversing from the processors added at step~$T_s$, the packets~$\{w_{r,k}\}_{r\in\cT_k'}$ are summed and transmitted to the root~$k$, and stored in the variable~$w_{k,k}$.
Since tree~$\cT_k'$ contains exactly the processors in~$\cS_k^-$, variable~$w_{k,k}$ stores~$y_k=\sum_{s\in\cS_k^-}w_{s,k}$.

Finally, observe that~$|\cT_k^{(t)}|=\frac{n}{(p+1)^t}$, which is the number of packets (elements in~$\bbF_q$) sent by processor~$\rmP_k$ at round~$t$ through each of its ports. Hence, summing over all rounds,
\small\begin{equation*}
     C_{2,\text{shoot}} =\sum_{t=1}^{T_s}(p+1)^{T_s-t}=\frac{(p+1)^{T_s}-1}{p}.\qedhere
\end{equation*}
\end{proof}\normalsize

Finally, in the most general case where~$K<mn$, some overlap of indices need to be resolved, as some computation results have been summed up twice.
In particular, observe that~$\cR_k^-\cap \cR_{k-(n-1)m}^-=[k-nm+1,k]$, which is an empty set only if~$nm=K$, as we assumed that~$(n-1)m<K\le nm$, and since indices are computed~$\bmod~K$.
Therefore,
\begin{equation}\label{eq:xkFromyk}
  \textstyle y_k=\coded{x}_k+\sum_{r\in[k-mn+1,k]}\bfC_{r,k}x_r,
\end{equation}
from which processor~$\rmP_k$ can individually compute~$\coded{x}_k$ with no communication, by computing the r.h.s sum and subtracting from~$y_k$.
This concludes the prepare-and-shoot algorithm, and provides the following by Lemma~\ref{lemma:prepare} and Lemma~\ref{lemma:shoot}.
\begin{theorem}\label{theorem:universal-c}
    The prepare-and-shoot algorithm has a communication cost of
    \small\begin{equation*}
    \begin{split}
        C&_\text{A2A,Univ}=\\
        &\begin{cases}
        \alpha \ceil{\log_{p+1}K} + \beta\ceil{\log_2q}\frac{(p+1)^{(L-1)/2}(p+2)-2}{p} & \mbox{if~$L$ is odd,}\\ 
        \alpha\ceil{\log_{p+1}K} + \beta\ceil{\log_2q}\frac{2(p+1)^{L/2}-2}{p} & \mbox{if~$L$ is even.}
        \end{cases}
    \end{split}
    \end{equation*}\normalsize
\end{theorem}

\begin{remark}\label{remark:approx-cost}
According to Lemma~\ref{lemma:universalC1bound}, the prepare-and-shoot algorithm is strictly optimal in terms of~$C_1$.
In addition, since~$(p+1)^{L}<K$, the algorithm has~$C_2\approx 2\sqrt{K}/p$, which is suboptimal in terms of~$C_2$ within a factor of~$\sqrt{2}$ (Lemma~\ref{lemma:universalC2bound}).
\end{remark}

\section{Algorithm for Vandermonde matrices}\label{section:vandermonde}
In this section, we shift our attention to specific all-to-all encode algorithms for computing Vandermonde matrices.
The procedure discussed in this section may be directly utilized in solving the  decentralized encoding problem for~\emph{non-systematic} Reed-Solomon codes (Appendix~\ref{appendix:non-sys}).
Otherwise, it may be utilized as a primitive in performing all-to-all encode operations for Cauchy-like matrices for their prevalent use in~\emph{systematic} Reed-Solomon codes (Section~\ref{section:reedsolo}).

We tailor both the scheduling and the coding scheme for square Vandermonde matrices~$\bfA\in\bbF_q^{K\times K}$ such that~$\bfA_{i,j}=\alpha_{j}^i$ for~$i,j\in[0,K-1]$, where~$\alpha_0,\ldots,\alpha_{K-1}$ are distinct elements of~$\bbF_q$ (hence~$K\le q$). 
That is, every processor~$\rmP_k$ requires~$\coded{x}_k=f(\alpha_k)$, where
\begin{equation}\label{eq:root-poly}
    f(z)=x_{0}\cdot z^0+x_{1}\cdot z^1+\cdots+x_{(K-1)}\cdot z^{K-1}.
\end{equation}

Inspired by the fast Fourier transform algorithm, we show a method that computes the~\emph{Discrete Fourier Transform} (DFT) matrix whose columns are permuted.
Later, this method serves as a primitive for the computation of general Vandermonde matrices, and brings a significant gain in~$C_2$ compared with the universal algorithm described earlier.

\subsection{Computing a (permuted) DFT Matrix }\label{section:DFT}
 
Assume that~~$K\vert q-1$ and $K=P^{H}$, and hence we may represent every~$k\in[0,K-1]$ as
\begin{equation}\label{eq:define-k}
    k=k_{H}P^{H-1}+k_{H-1}P^{H-2}+\cdots+k_{1}
\end{equation}
with~$H$ many integers, where $k_H,\ldots,k_1\in[0,P-1]$.
Meanwhile, we define the \textit{reverse} of~$k$ as
\begin{align}\label{eq:define-k-prime}
    k'&\triangleq k_1P^{H-1} + k_2P^{H-2} + \cdots+k_{H}\nonumber\\
    &= k'_{H}P^{H-1}+k'_{H-1}P^{H-2}+\cdots+k'_{1},
\end{align}
i.e., ${k'}$ is represented by the same set of integers as~$k$ but in the reversed order, and for~$h\in[1,H]$,~$k'_h=k_{H+1-h}$.

Let~$\beta=g^\frac{q-1}{K}$ be a primitive~$K$-th root of unity, where~$g$ is a generator of~$\bbF_q$.
A DFT matrix~$\bfD_K$ is a Vandermonde one with~$\alpha_k=\beta^{k}$, i.e.,
\small\begin{equation}\label{eq:DFTmatrix}
\bfD_K=\left[\begin{array}{ccccc}
    1      &1                &1              &\cdots & 1                 \\
    1      &\beta^{1}            &\beta^{2}        &\cdots & \beta^{(K-1)}       \\
    \vdots &\vdots           &\vdots         &\ddots & \vdots            \\
    1      &\beta^{{1}(K-1)}   &\beta^{2(K-1)} &\cdots & \beta^{(K-1)(K-1)}    \\
    \end{array}\right].
\end{equation}\normalsize

However, instead of computing~$\bfD_k$, our algorithm computes a slightly different matrix~$\bfD_k'\triangleq\bfD_k\bfP$, where~$\bfP$ is a permutation matrix such that~$\bfP_{k,k'}=1$ for every~$k\in[0,K-1]$, and~$0$ elsewhere.
This is because every processor~$\rmP_k$ will obtain an evaluation of a polynomial~$f(z)$~\eqref{eq:root-poly} at~$\beta^{k'}$ instead of~$\beta^k$ at the end of the algorithm.

To accomplish this task, we propose a method based on two trees; a tree of field elements and a tree of polynomials.
The former tree is defined recursively in a top–down approach.
Initially, the tree contains only the root vertex at level~$0$, which hosts~$\gamma=1$, i.e., the unity of the field~$\bbF_q$.
For~$h\in[H]$, level~$h$ of the tree is constructed by adding~$P$ child vertices to every existing vertex at level~$h-1$.
Specifically, consider some vertex that resides at level~$h-1$ and is indexed by
$
k_{h-1}k_{h-2}\cdots k_1.
$
Then, for every~$\rho\in[0,P-1]$, its~$\rho$-th child is indexed with~$\rho$ prepended to its own index, i.e.,
$
\rho k_{h-1}k_{h-2}\cdots k_1.
$

In other words, indices of sibling vertices differ only in the most significant (leftmost) digit, and the index of a parent vertex is given by omitting the leftmost digit of its child.
That is, 
\begin{equation*}
    {0k_{h-1}\cdots k_{0}},{1k_{h-1}\cdots k_{0}},\ldots,{(P-1)k_{h-1}\cdots k_{0}}
\end{equation*}
all share the same parent vertex indexed by~${k_{h-1}\cdots k_{0}}$.

Moreover, the vertex~$k_hk_{h-1}\cdots k_1$ hosts a field element
\begin{equation}\label{eq:define-field-element}
    \gamma_{k_{h}\cdots k_{1} } \triangleq (\beta^{k_{h}P^{h-1}+k_{h-1}P^{h-2}+\cdots+k_{1}})^{K/{P}^h}.\\
\end{equation}
Then, it is readily verified that every child element is a distinct~$P$-th root of the parent element, i.e., for~$\rho\in[0,P-1]$
\begin{align}\label{eq:treeVertexGamma}
    (\gamma_{\rho k_{h-1}\cdots k_{1}})^{P}
    &=(\beta^{\rho P^{h-1}+k_{h-1}P^{h-2}+\cdots+k_{1}})^{K/{P}^h\cdot{P}}\nonumber\\
    &=(\beta^{k_{h-1}P^{h-2}+\cdots+k_{1}}\cdot \beta^{\rho P^{h-1}})^{K/P^{h-1}}\nonumber\\
    &\overset{\eqref{eq:define-field-element}}{=}(\beta^{k_{h-1}P^{h-2}+\cdots+k_{1}})^{K/P^{h-1}}(\beta^{\rho})^{K}\nonumber\\
    &=\gamma_{k_{h-1}\cdots k_{1}}\cdot(g^\frac{q-1}{K})^{\rho K}=\gamma_{k_{h-1}\cdots k_{1}}.
\end{align}
In particular, each leaf (vertex at level~$H$) hosts
$$
\gamma_{k_{H}\cdots k_{1}}= \beta^{k_{H}P^{H-1}+k_{H-1}P^{H-2}+\cdots+k_{1}}=\beta^{k},
$$
which is the evaluation point of the processor~$\rmP_{k'}$, and the root (vertex at level~$0$) is represented by~$0$ digits and hosts~$\gamma=1$.

\begin{figure}
    \centering
    \includegraphics[width=0.48\textwidth]{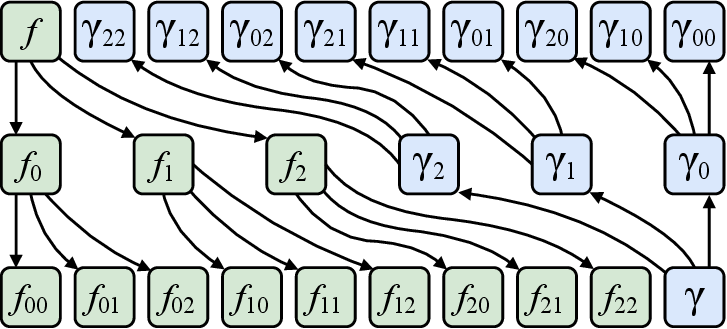}
    \caption{An illustration of the trees defined in \eqref{eq:treeVertexGamma} and~\eqref{eq:combinationOfChildren} for the special case~$K=9$ and~$p=2$.
    The polynomial tree (left) is rooted at $f(z)=x_0+x_1z+\cdots+x_{8}z^{8}$. 
    The polynomials in level~$1$ are 
    $f_0(z)=x_0+x_3z+x_{6}z^{3}$,
    $f_1(z)=x_1+x_4z+x_{7}z^{3}$, and~$f_2(z)=x_2+x_5z+x_{8}z^{3}$.
    The polynomials in level~$2$ are~$f_{00}=x_0,f_{01}=x_3,f_{02}=x_6,f_{10}=x_1,f_{11}=x_4,f_{12}=x_7,f_{20}=x_2,f_{21}=x_5,f_{22}=x_8$.
    The tree of field elements (right) is rooted at~$\gamma=1$ and every child of is a distinct cubic root of its parent.
    }
    \label{fig:tree}
\end{figure}

The tree of polynomials, on the contrary, is defined recursively in a bottom-up approach.
In this tree, every leaf vertex with index~$k_1k_{2}\cdots k_H$ hosts the constant polynomial
\begin{equation}\label{eq:leafVertex}
    f_{k_1k_2\cdots k_H}(z)=x_{k_{H}P^{H-1}+k_{H-1}P^{H-2}+\cdots+k_{1}}=x_{k}.
\end{equation}
Moreover, a parent vertex is indexed by omitting the leftmost digit of its children, and hosts a combination of their polynomials evaluated at~$z^{P}$, i.e.,
\begin{equation}\label{eq:combinationOfChildren}
   \textstyle f_{k_{1}\cdots k_{h-1}}(z)=\sum_{\rho\in[0,P-1]}z^{\rho} f_{k_{1}\cdots k_{h-1}\rho} (z^{P}).
\end{equation}
 
By~\eqref{eq:leafVertex} and~\eqref{eq:combinationOfChildren}, the polynomial at the root level is of the form~$\sum_{k\in[0,K-1]}x_kz^{b_k}$, and
\small\begin{align*}
        b_{k}&= k_1 + P(k_2 + P(k_3+\cdots +P (k_{H-1} +Pk_H)\cdots))\\
            &\overset{(a)}{=}k_1+k_2P\cdots+k_{H-1}P^{H-2}+k_{H}P^{H-1}\overset{\eqref{eq:define-k}}{=}k,
\end{align*}\normalsize
where~$(a)$ follows from Horner's rule.
As a result, the root polynomial tree is indeed~$f(z)$ as defined in~\eqref{eq:root-poly}.
See Figure~\ref{fig:tree} for examples of both trees.

The proposed algorithm is defined recursively using the above trees. 
For~$k,k'$ as defined in~\eqref{eq:define-k} and~\eqref{eq:define-k-prime}, respectively, let
$$
Q(k,h)\triangleq  f_{k_{H}\cdots k_{h}}(\gamma_{k_{h-1}\cdots k_{1}}),
$$
and hence every processor~$\rmP_k$ initially has
$$
x_k=  f_{k_{1}\cdots k_{H}} (\gamma) =f_{k'_{H}\cdots k'_{1}}(\gamma)= Q(k',H),
$$
which is an evaluation of~$f_{k_{1}\cdots k_{H}}(z)$ (a leaf in the polynomial tree) at~$\gamma=1$ (the root of the element tree).
It requires
$$
\coded{x}_k=f(\beta^{k'})=f(\gamma_{k'_{H}\cdots k'_{1}})=Q(k',H),
$$
which is an evaluation of~$f(z)$ (the root of polynomial tree) at~$\gamma_{k'_{H}\cdots k'_{1}}=\beta^{k'}$ (a leaf of the element tree).

Assume that every processor~$\rmP_{k}$ has learnt~$Q(k',h)$,
and we show how it can obtain~$Q(k',h+1)$ with one all-to-all encode operation.
Note that,
\small\begin{align}\label{eq:linearCombination}
   Q(k',h+1)=&f_{k'_{H}\cdots k'_{h+1}}(\gamma_{k'_{h}\cdots k'_{1}}) \nonumber\\
    \overset{\eqref{eq:combinationOfChildren}}{=}& \textstyle \sum_{\rho\in[0,P-1]}(\gamma_{k'_{h}\cdots k'_{1}})^{\rho} f_{k'_{H}\cdots k'_{h+1}\rho}((\gamma_{k'_{h}\cdots k'_{1}})^{P})\nonumber\\
    \overset{\eqref{eq:treeVertexGamma}}{=}&\textstyle \sum_{\rho\in[0,P-1]}(\gamma_{k'_{h}\cdots k'_{1}})^{\rho} f_{k'_{H}\cdots k'_{h+1}\rho}(\gamma_{k'_{h-1}\cdots k'_{1}})\nonumber\\
    =&\textstyle\sum_{\rho\in[0,P-1]}(\gamma_{k'_{h}\cdots k'_{1}})^{\rho} Q( \xi_{h,\rho}(k') ,h),
\end{align}\normalsize
where~$\xi_{h,\rho}(k)\triangleq~k+ (\rho-k_h)\cdot P^{h-1}$.
Therefore, the required~$Q(k',h+1)$ can be computed as a linear combination of
$$Q( \xi_{h,0}(k') ,h),\ldots,Q( \xi_{h,P-1}(k') ,h),$$
according to~\eqref{eq:linearCombination}.
For all processors~$\rmP_{\xi_{h,0}(k)},\ldots,\rmP_{\xi_{h,P-1}(k)}$, we write~\eqref{eq:linearCombination} in matrix form:
 \small\begin{equation*}
    \left[\begin{array}{c}
    Q( \xi_{h+1,0}(k'),h+1)\\
    \vdots\\
    Q( \xi_{h+1,P-1}(k'),h+1)
    \end{array}\right]^\intercal=
    \left[\begin{array}{c}
    Q( \xi_{h,0}(k'),h)\\
    \vdots\\
    Q( \xi_{h,P-1}(k'),h)
    \end{array}\right]^\intercal
    \cdot
    \bfA_k^{(h)},
\end{equation*} \normalsize 
where~$\bfA_k^{(h)}$ is a Vandermonde matrix defined by~$P$ field elements~$\gamma_{1k_{h-1}\cdots k_{0}},\ldots,\gamma_{Pk_{h-1}\cdots k_{0}}$ , i.e.,
\small\begin{equation}\label{eq:linearCombinationMatrix}
    \bfA_k^{(h)}=
    \left[\begin{array}{ccc}
    1 & \cdots & 1  \\
    \gamma_{1k'_{h-1}\cdots k'_{0}} & \cdots & \gamma_{Pk'_{h-1}\cdots k'_{1}}  \\
    \vdots                            & \ddots & \vdots\\
    (\gamma_{1k'_{h-1}\cdots k'_{0}})^{P-1} & \cdots & (\gamma_{Pk'_{h-1}\cdots k'_{1}})^{P-1} \\
    \end{array}\right].
\end{equation}\normalsize
Hence, processor~$\rmP_k$ can obtain~$Q(k',h+1)$ by using~$Q(k',h)$ as initial data, and performing an all-to-all encode operation on~$\bfA_k^{(h)}$ with processors that are in possession of
\begin{equation*}
    Q({\xi_{h,0},(k'),h}),\ldots,Q({\xi_{h,P-1},(k')},h),
\end{equation*}
i.e., the processors with the same index except for the~$h$-th digit.
This can be completed using the prepare-and-shoot algorithm introduced in Section~\ref{section:universal}.

Recall that every processor~$\rmP_k$ has~$Q(k',0)$ at the beginning of the algorithm, and processors in possession of~$Q(k',h)$ can obtain~$Q(k',h+1)$ by completing the above operation.
Therefore, by induction,~$\rmP_k$ obtains the coded packet~$\coded{x}_k=Q(k',H)$ after~$H$ rounds by repeating the above operation~$H$ times.
Hence, let~$C_\text{A2A,Univ}(P)$ be the communication cost in the prepare-and-shoot algorithm to compute any~$P\times P$ matrix stated in Theorem~\ref{theorem:universal-c}, and we have the following theorem.

\begin{theorem}\label{theorem:dft-cost}
    When~$K\mid q-1$ and~$K=P^H$ for some integers~$P,H$, the above algorithm for computing a (permuted) DFT matrix~$\bfD_k'=\bfD_k\bfP$ has
    \begin{equation*}
      C_\text{A2A,DFT}=H\cdot C_\text{A2A,Univ}(P).  
    \end{equation*}
\end{theorem}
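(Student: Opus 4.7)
The plan is to proceed by induction on~$h\in\{0,1,\ldots,H\}$, showing that after~$h$ iterations of the recursive scheme built on~\eqref{eq:linearCombination}, every processor~$\rmP_k$ stores~$Q(k',h)$, at a total communication cost of~$h\cdot C_\text{A2A,Univ}(P)$. The base case~$h=0$ is established in the paragraph preceding~\eqref{eq:linearCombination}: the initial datum~$x_k$ is exactly the value of the leaf polynomial~$f_{k'_H\cdots k'_1}$ at the root~$\gamma=1$ of the element tree, which is~$Q(k',0)$, and this requires no communication. The induction terminates at~$h=H$, at which point~$Q(k',H)=f(\beta^{k'})=\coded{x}_k$, completing the DFT evaluation task.

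For the inductive step, I would appeal directly to~\eqref{eq:linearCombination}: it expresses~$Q(k',h+1)$ as a linear combination of~$Q(\xi_{h,0}(k'),h),\ldots,Q(\xi_{h,P-1}(k'),h)$ whose coefficients form a column of the~$P\times P$ Vandermonde matrix~$\bfA_k^{(h)}$ of~\eqref{eq:linearCombinationMatrix}. By the inductive hypothesis each of the~$P$ processors with indices~$\xi_{h,0}(k'),\ldots,\xi_{h,P-1}(k')$ already holds its own~$Q(\cdot,h)$. Since these are precisely the processors whose reversed indices agree with~$k'$ in every digit except the~$h$-th, they form a natural group of~$P$ peers, and a single~$P\times P$ all-to-all encode on this group, with~$\bfA_k^{(h)}$ as its coding matrix, simultaneously supplies every member with the required~$Q(\cdot,h+1)$. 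Running this encode with the prepare-and-shoot algorithm of Section~\ref{section:universal} costs~$C_\text{A2A,Univ}(P)$.

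It remains to observe that all such groups can execute in parallel within a single iteration. For fixed~$h$, the groups are indexed by the~$H-1$ non-varying digits, so they partition the~$K=P^H$ processors into~$P^{H-1}$ disjoint blocks of size~$P$. Because no processor lies in two blocks during the same iteration, the~$p$ ports of each processor are used only for within-block traffic, and the parallel invocations of prepare-and-shoot generate no port conflicts; the cost of one iteration is therefore exactly~$C_\text{A2A,Univ}(P)$, and summing over~$h$ yields the claimed~$H\cdot C_\text{A2A,Univ}(P)$. The only real subtlety I anticipate is keeping track of two opposing indexing conventions---children obtained by appending in the polynomial tree versus prepending in the element tree---so that the reversed index~$k'$ threads correctly through the recursion and the pre-image processors at step~$h$ are those varying in the~$h$-th digit of~$k'$ rather than of~$k$; once this bookkeeping is in place, the argument reduces to the straightforward combination of~\eqref{eq:linearCombination} with the cost bound from Theorem~\ref{theorem:universal-c2}.
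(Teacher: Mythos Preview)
Your proposal is correct and follows exactly the paper's reasoning: the theorem is stated in the paper without a separate proof, as it follows immediately from the algorithm description---$H$ iterations, each an all-to-all encode on a $P\times P$ matrix via prepare-and-shoot---and your induction on~$h$ simply makes that argument explicit. The one caveat is that when you write ``the~$P$ processors with indices~$\xi_{h,0}(k'),\ldots,\xi_{h,P-1}(k')$,'' the actual processor \emph{labels} are the reverses of those quantities (since~$\rmP_j$ holds~$Q(j',h)$, the processor holding~$Q(\xi_{h,\rho}(k'),h)$ is~$\rmP_{(\xi_{h,\rho}(k'))'}$, i.e., the processor whose index agrees with~$k$ except in the~$(H{+}1{-}h)$-th digit); you already flag this reversal bookkeeping as the main subtlety, and once corrected the partition-into-disjoint-blocks argument goes through unchanged.
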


\noindent As stated in Remark~\ref{remark:approx-cost}, the communication cost
\begin{equation*}
    C_\text{A2A,DFT}\approx H\cdot\left(\alpha\cdot\ceil{\log_{p+1} P }+\beta\ceil{\log_2q} \cdot \frac{2\sqrt{P}}{p}\right).
\end{equation*} 
Moreover, in the special cases that~$P=p+1$, every all-to-all encode operation may be completed with one~$1$-to-all broadcast.
Therefore, the communication cost is strictly optimal (see Remark~\ref{remark:bounds4SpecificMatrices}), as stated in the following corollary.
\begin{corollary}\label{theorem:optimalityOfDrawAndLoose}
    When~$K\mid q-1$ and~$K=(p+1)^H$, the above algorithm for computing a (permuted) DFT matrix has
    \begin{equation*}
    \textstyle C_\text{A2A,DFT}= H\cdot(\alpha+\beta\ceil{\log_2q} ).
    \end{equation*}
\end{corollary}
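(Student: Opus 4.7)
The plan is to combine the preceding theorem's decomposition $C_\text{A2A,DFT} = H \cdot C_\text{A2A,Univ}(P)$ with a direct analysis of the special case $P = p+1$, where each per-level all-to-all encode collapses to a single communication round. Concretely, the algorithm in Section~\ref{section:DFT} performs, at each of the $H$ levels of the polynomial/field-element trees, exactly one all-to-all encode among a group of $P = p+1$ processors for the Vandermonde matrix $\bfA_k^{(h)}$ defined in~\eqref{eq:linearCombinationMatrix}. So it suffices to bound the cost of one such $(p+1)$-processor all-to-all encode.

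For this bound, I would give a direct argument rather than plug into the prepare-and-shoot formula: in the $p$-port, fully-connected model, a group of $p+1$ processors induces a complete graph whose $p$ incident edges at every vertex are matched exactly by the $p$ ports. Hence a single round suffices in which every processor simultaneously sends its initial field element through each of its $p$ ports to a distinct peer in the group. After this round every processor holds the complete vector $(x_0,\ldots,x_p)$ of initial symbols of the group, and can locally form the required linear combination dictated by the corresponding column of $\bfA_k^{(h)}$ with no further communication. The cost of this round is $C_1 = 1$ and $C_2 = 1$, giving $C_\text{A2A,Univ}(p+1) = \alpha + \beta\ceil{\log_2 q}$. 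Substituting into $C_\text{A2A,DFT} = H \cdot C_\text{A2A,Univ}(p+1)$ yields the claimed expression.

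There is no substantive obstacle; the only thing worth verifying as a sanity check is consistency with the general prepare-and-shoot cost in Theorem~\ref{theorem:universal-c2}. Specializing to $K = p+1$ gives the largest $L$ with $(p+1)^L < K$ equal to $L = 0$, whence $T_p = 1$ and $T_s = 0$. Thus the shoot phase is empty and the prepare phase consists of a single broadcast round carrying one element, reproducing the same single-round, single-element cost obtained directly above. Combined with the optimality noted in Remark~\ref{remark:bounds4SpecificMatrices} (any computation of a matrix with a nonzero row requires at least $\log_{p+1} K = H$ rounds and $H$ transmitted elements), this also confirms the strict optimality claim made immediately before the corollary.
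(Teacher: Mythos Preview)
Your proposal is correct and follows essentially the same approach as the paper: the paper simply notes that when $P=p+1$ each per-level all-to-all encode can be completed in a single round (phrased there as ``one $1$-to-all broadcast''), and then applies the decomposition $C_\text{A2A,DFT}=H\cdot C_\text{A2A,Univ}(P)$. Your additional sanity check against Theorem~\ref{theorem:universal-c2} with $L=0$, $T_p=1$, $T_s=0$ is a nice confirmation but not a different route.
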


Next, we emphasize the~\emph{invertibility} of the presented algorithm in the following lemma; this will be useful in the computation of Cauchy-like matrices.
\begin{lemma}\label{lemma:invertibilityOfLoose}
    The above algorithm can be used to compute the inverse of a (permuted) DFT matrix, with the same~$C_1$ and~$C_2$.
\end{lemma}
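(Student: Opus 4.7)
The plan is to exploit the classical identity $\bfD_K^{-1}=\tfrac{1}{K}\bar\bfD_K$, where $\bar\bfD_K$ is the DFT matrix built from $\bar\beta\triangleq\beta^{-1}$. Since $K\mid q-1$, the element $\bar\beta$ lies in $\bbF_q$ and is itself a primitive $K$-th root of unity, so $\bar\bfD_K$ is a bona fide DFT matrix. Consequently, a (permuted) inverse DFT can be written as $\tfrac{1}{K}\bar\bfD_K\bfP$, i.e., a scalar multiple of a permuted DFT of exactly the form the algorithm of Section~\ref{section:DFT} is designed to compute.

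First I would verify that substituting $\bar\beta$ for $\beta$ throughout the construction changes only the coefficients inside each inner Vandermonde sub-problem~\eqref{eq:linearCombinationMatrix}: the element tree, the polynomial tree, the scheduling, the number of rounds, and the per-round message sizes all depend only on $K$, $P$, $H$, and the port count $p$, not on which primitive $K$-th root of unity is chosen. Therefore the modified algorithm completes in the same number of rounds and transfers the same number of field elements as the original, producing the matrix $\bar\bfD_K\bfP$ with identical $C_1$ and $C_2$.

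Each processor then multiplies its output locally by $\tfrac{1}{K}$, a purely local operation that incurs no communication, yielding $\tfrac{1}{K}\bar\bfD_K\bfP=\bfD_K^{-1}\bfP$, i.e., the desired (permuted) inverse DFT. The main point needing careful bookkeeping will be confirming that the bit-reversal permutation $\bfP$ appearing on the right of the computed matrix agrees with whatever permutation convention the caller expects; because it is the same $\bfP$ that appeared in the forward algorithm, any downstream computation (in particular the applications to Cauchy-like matrices in Section~\ref{section:reedsolo}) that was compatible with the forward permutation is automatically compatible here, so no additional rearrangement is required and $C_1$ and $C_2$ match those of the forward algorithm exactly.
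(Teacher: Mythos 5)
Your central identity is correct and is a genuinely different route from the paper's. The paper does not invert the DFT matrix algebraically at all: it observes that each stage of the recursion is an all-to-all encode on an invertible $P\times P$ Vandermonde matrix $\bfA_k^{(h)}$ from \eqref{eq:linearCombinationMatrix}, and therefore runs the $H$ stages in reverse order, computing $(\bfA_k^{(h)})^{-1}$ at each stage with the universal prepare-and-shoot (which handles any $P\times P$ matrix at the same cost). Your approach instead uses $\bfD_K^{-1}=\tfrac{1}{K}\bar\bfD_K$ with $\bar\beta=\beta^{-1}$ and reruns the forward schedule verbatim; note that $1/K$ exists in $\bbF_q$ because $K\mid q-1$ forces $\gcd(K,\operatorname{char}\bbF_q)=1$, a point worth stating. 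The paper's stage-reversal technique has the added benefit that it transfers unchanged to Lemma~\ref{lemma:invertibilityOfDrawLoose}, where the draw phase involves a general Vandermonde matrix $\bfV_M$ with no reciprocal-root identity available.

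There is, however, a genuine gap in your permutation bookkeeping, precisely at the point you wave it away. The forward algorithm computes $\bfD_K\bfP$, whose inverse is $(\bfD_K\bfP)^{-1}=\bfP^{-1}\bfD_K^{-1}=\bfP\bfD_K^{-1}=\tfrac{1}{K}\bfP\bar\bfD_K$ (the bit-reversal $\bfP$ is an involution). Rerunning the forward algorithm with $\bar\beta$ and scaling produces $\tfrac{1}{K}\bar\bfD_K\bfP$, and these two matrices are \emph{not} equal, because $\bfP$ does not commute with $\bfD_K^{-1}$: the $(i,j)$ entries are $\tfrac{1}{K}\bar\beta^{i'j}$ versus $\tfrac{1}{K}\bar\beta^{ij'}$, and already for $K=4$, $i=1$, $j=2$ one has $i'j=0\neq 2=ij'$. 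Concretely, if the processors start from $\coded{x}=x\,\bfD_K\bfP$ and apply your procedure, they obtain $x\,\bfD_K\bfP\bfD_K^{-1}\bfP$, which is not a permutation of $x$, so the initial data is not recovered; and in the intended application (undoing the loose phase inside Lemma~\ref{lemma:invertibilityOfDrawLoose} and Theorem~\ref{theorem:cauchy}) the discrepancy is a permutation sandwiched in the middle of a matrix product, which cannot be absorbed into the ``up to column permutation'' slack at the ends. The repair is to compute $\bfP\bar\bfD_K=(\bar\bfD_K\bfP)^{\intercal}$ rather than $\bar\bfD_K\bfP$, i.e., to run the transposed (communication-reversed) schedule or to re-index which processor is responsible for which evaluation point --- at which point you are essentially back to the paper's stage-reversal argument. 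The cost claim ($C_1$, $C_2$ unchanged) survives either fix, but ``no additional rearrangement is required'' does not.
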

\begin{proof}
Since the matrix~$\bfA_k^{(t)}$ defined in~\eqref{eq:linearCombinationMatrix} is an invertible Vandermonde matrix, it follows that each step of the induction is invertible.
That is, the processor~$\rmP_k$ in possession of~$Q(k',h+1)$ can obtain~$Q(k',h)$ with one all-to-all encode operation on the inverse of~$\bfA_k^{(h)}$.
Therefore, a processor~$\rmP_k$ in possession of~$Q(k',H)$ can obtain~$Q(k',0)$ in~$H$ rounds, with the same~$C_1$ and~$C_2$ as shown in Theorem~\ref{theorem:optimalityOfDrawAndLoose}.
\end{proof}

\subsection{Generalization to Vandermonde matrices}

The aforementioned algorithm computes a unique Vandermonde matrix with improved communication cost, but requires that~$K=P^H$ for some integers~$P,H$ and that~$K\vert q-1$. 
In cases where~$K\nmid q-1$ and~$K\le q-1$,
we use the above DFT algorithm as a primitive for improved~$C_2$ with respect to prepare-and-shoot in the computation of multiple other Vandermonde matrices, described next.

For some integer~$P$, let~$H$ be the maximum integer such that~$Z=P^H$ divides both~$K$ and~$q-1$ (i.e.,~$P^H\vert\gcd(K,q-1)$).
Let~$K=M\cdot Z$, and we organize processors in a~$M\times Z$ grid in which processor~$\rmP_{i,j}=i\cdot Z+j$ resides in row~$i$ and column~$j$, for~$i\in[0,M-1]$ and~$j\in[0,Z-1]$.

Let~$\varphi$ be a injective map from~$[0,M-1]$ to~$[0,(q-1)/Z-1]$.
Note that such a mapping exists since~$q-1\ge ZM=K$. 
We define the evaluation point of processor~$\rmP_{i,j}$ as
\begin{equation}\label{eq:omega-ij}
    \omega_{i,j}\triangleq \alpha_i\cdot\beta_{j'},~\mbox{where}~\alpha_i=g^{\varphi(i)}~\mbox{and}~\beta_{j'}=  g^{{j'}\cdot\frac{q-1}{Z}}.
\end{equation}
The injectivity of~$\varphi$ assures that no two evaluation points are identical, and hence no two processors require the same coded packet\footnote{Since there exists~$\binom{(q-1)/Z}{M}$ possible choices for~$\varphi$, it follows that the proposed algorithm computes this many different Vandermonde matrices up to permutation of columns.}.
Note that~$j'\in[0,Z-1]$ is the reverse of~$j$, similarly defined as~$k'$ in~\eqref{eq:define-k-prime}.

Recall that the coded data~$\coded{x}_{i,j}$ required by processor~$\rmP_{i,j}$ is an evaluation of~$f(z)$~\eqref{eq:root-poly} at~$\omega_{i,j}$.
We have
\small
\begin{align}\label{eq:extract-fl-1}
   \coded{x}_{i,j}=f(\omega_{i,j})&=\textstyle  \sum_{k=0}^{K-1}x_k(\omega_{i,j})^k\overset{\eqref{eq:omega-ij}}{=}\sum_{k=0}^{K-1}x_k\alpha_i^k\beta_{j'}^k\nonumber\\
    &\textstyle =\sum_{\ell=0}^{Z-1} \beta_{j'}^\ell \left(\sum_{k=\ell\bmod Z}x_k\alpha_{i}^k\beta_{j'}^{k-\ell}\right). 
\end{align}
\normalsize
Since~$g^{q-1}=1$, it follows that~$\beta_{j'}^{k-\ell}=g^{j'\cdot\frac{q-1}{Z}(k-\ell)}=1$ whenever~$k=\ell\bmod Z$, and hence
\small\begin{align}
    \coded{x}_{i,j}&\overset{\eqref{eq:extract-fl-1}}{=}\sum_{\ell=0}^{Z-1}\beta_{j'}^\ell\sum_{k=\ell\bmod Z}x_k\alpha_{i}^k=\sum_{\ell=0}^{Z-1}\beta_{j'}^\ell f_\ell(\alpha_i),~\mbox{where}\label{eq:extract-fl-2}\\
    f_\ell(z)&= x_{\ell}z^{\ell}+x_{\ell+Z}z^{\ell+Z}+\cdots+x_{\ell+Z\cdot (M-1)}z^{\ell+Z\cdot (M-1)}\nonumber\\
    &=x_{0,\ell}z^{\ell}+x_{1,\ell}z^{\ell+Z}+\cdots+x_{M-1,\ell}z^{\ell+Z\cdot (M-1)}.\label{eq:inside-of-fl}
\end{align}\normalsize

In matrix form, for every~$i\in[0,M-1]$,
\small\begin{equation}\label{eq:loosePhase}
    \begin{bmatrix}
     \coded{x}_{i,0}\\
        \vdots\\
        \coded{x}_{i,Z-1}
    \end{bmatrix}^\intercal
    \overset{\eqref{eq:extract-fl-2}}{=}
    \begin{bmatrix}
        f_0(\alpha_i)\\
        \vdots\\
        f_{Z-1}(\alpha_i)
    \end{bmatrix}^\intercal
    \cdot
    \left[ {\begin{array}{ccc}
    \beta_{0}^0    &  \cdots & \beta_{Z-1}^0    \\
    \vdots       &  \ddots & \vdots           \\
    \beta_{0}^{Z-1}&  \cdots & \beta_{Z-1}^{Z-1}\\
    \end{array} } \right]\cdot \bfP,
\end{equation}\normalsize
in which~$\bfP$ is a permutation matrix such that~$\bfP_{j,j'}=1$ for every~$j\in[0,Z-1]$, and~$0$ elsewhere.
Moreover, define
\small\begin{equation}\label{eq:define-vm}
\bfV_M\triangleq
\left[ {\begin{array}{ccc}
    \alpha_0^{Z\cdot0}  & \cdots & \alpha_{M-1}^{Z\cdot0}\\
    \vdots              & \ddots & \vdots \\
    \alpha_0^{Z(M-1)}   & \cdots & \alpha_{M-1}^{Z(M-1)} 
    \end{array} } \right],
\end{equation}\normalsize
and hence for every~$j\in[0,Z-1]$, we have
\small\begin{equation}\label{eq:drawPhase}
    \left[ {\begin{array}{c}
        f_j(\alpha_0)\\
        \vdots\\
        f_j(\alpha_{M-1})
    \end{array} } \right]^\intercal
    \overset{\eqref{eq:inside-of-fl}}{=}
    \begin{bmatrix}
        x_{0,j}\\
        \vdots\\
       x_{M-1,j}
    \end{bmatrix}^\intercal
    \cdot \bfV_M\cdot\diag
    \left[ {\begin{array}{c}
    \alpha_0^j \\
    \vdots     \\
     \alpha_{M-1}^j
    \end{array} } \right]^\intercal.
\end{equation}\normalsize

The above observations allow us to decompose the procedure into the~\emph{draw} and the~\emph{loose} phase, detailed as follows.

\textbf{Draw Phase:}
As shown in~\eqref{eq:drawPhase},~$(f_j(\alpha_0),\ldots,f_j(\alpha_{M-1}))$ is given by multiplying~$(x_{0,j},\ldots,x_{M-1,j})$ with Vandermonde matrix~$\bfV_M$ defined by~$\{\alpha_0^Z,\ldots,\alpha_{M-1}^Z\}$, followed by matrix~$\diag(\alpha_0^j,\cdots,\alpha_{M-1}^j)$.
Hence, processors in column~$j$ may compute matrix~$\bfV_m$ using the universal prepare-and-shoot (Section~\ref{section:universal}), and once completed, every processor~$\rmP_{i,j}$~\emph{locally} multiplies the resulting packet with~$\alpha_i^j$ and obtains~$f_j(\alpha_i)$

\textbf{Loose Phase:}
As shown in~\eqref{eq:loosePhase},~$\coded{x}_{i,0},\ldots,\coded{x}_{i,Z-1}$ are linear combinations of polynomial evaluations~$f_0(a_i),\ldots,f_{Z-1}(a_i)$, which are defined by the columns of the (permuted) DFT matrix~$\bfD_Z\bfP$.
Therefore, this phase involves~$M$ row-wise all-to-all encode operations.
For~$i\in[0,M-1]$, processors in row~$i$ compute~$\bfD_Z\bfP$ using the specialized algorithm a for the permuted DFT matrix (Section~\ref{section:DFT}).
After~$H$ rounds, every processor~$\rmP_{i,j}$ obtains the coded packet~$\coded{x}_{i,j}$.

Observe that the draw phase takes $C_1=\ceil{\log_{p+1}M}$ rounds and~$C_2=C_\text{A2A,Univ}(M)$ communication, and the loose phase takes~$C_1=H$ rounds and~$C_2=H$ communication. 
Therefore, we have the following.

\begin{theorem}\label{theorem:Vandermonde}
The presented draw-and-loose algorithm can compute~$\binom{(q-1)/Z}{M}$ different Vandermonde matrices (up to permutation of columns) with
\begin{equation*}
    C_\text{A2A,Vand}=(\alpha+\beta \ceil{\log_2q})\cdot H +  C_\text{A2A,Univ}(M).
\end{equation*}
In particular, if~$M\le P^2$ then~$C_\text{A2A,Univ}(M)=2(\alpha+\beta )$, i.e.,
\begin{equation*}
    C_\text{A2A,Vand}=(\alpha+\beta \ceil{\log_2q})\cdot \ceil{\log_{p+1} Z}.
\end{equation*}
\end{theorem}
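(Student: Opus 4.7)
The plan is to separately verify correctness, count the family of Vandermonde matrices handled, and sum the costs of the two phases. Each piece essentially amounts to unpacking the algebraic setup laid out in equations~\eqref{eq:omega-ij}--\eqref{eq:drawPhase}, plus invoking results already established for the two sub-routines.

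\textbf{Correctness.} I would first observe that the identity~\eqref{eq:extract-fl-2} expresses each desired output $\coded{x}_{i,j}$ as an evaluation of a (permuted) DFT on the auxiliary values $f_0(\alpha_i), \ldots, f_{Z-1}(\alpha_i)$, while identity~\eqref{eq:drawPhase} expresses the vector $(f_j(\alpha_0),\ldots,f_j(\alpha_{M-1}))$ as the Vandermonde-then-diagonal image of the column $(x_{0,j},\ldots,x_{M-1,j})$. These two decompositions match precisely what the draw and loose phases compute, so correctness reduces to the correctness of the prepare-and-shoot routine on $\bfV_M$ (column-wise, followed by a free local multiplication by $\alpha_i^j$) and of the Section~\ref{section:DFT} routine on $\bfD_Z\bfP$ (row-wise).

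\textbf{Counting.} Since the Vandermonde matrix produced by the algorithm is determined (up to a permutation of columns) by the set of evaluation points $\{\omega_{i,j}\}$, and since distinct injections $\varphi$ with the same image yield the same set of $\alpha_i$'s and hence the same set of $\omega_{i,j}$'s (again up to column permutation), the number of distinct matrices realizable by the procedure is the number of size-$M$ subsets of $[0,(q-1)/Z-1]$, namely $\binom{(q-1)/Z}{M}$.

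\textbf{Cost accounting.} I would add the two phases sequentially. The draw phase runs $Z$ independent column-wise instances of prepare-and-shoot on the $M\times M$ matrix $\bfV_M$, contributing $C_{\text{A2A,Univ}}(M)$ in total (the post-multiplications by $\alpha_i^j$ are local and free). The loose phase runs $M$ independent row-wise instances of the permuted DFT algorithm on $Z=(p+1)^H$ elements; by Corollary~\ref{theorem:optimalityOfDrawAndLoose} applied with $P=p+1$, each such instance costs $H(\alpha+\beta\lceil\log_2 q\rceil)$. Summing gives the claimed expression. For the special case $M\le P^2$, we have $\lceil\log_{p+1}M\rceil\le 2$, so $L\le 1$ in Theorem~\ref{theorem:universal-c2}, pinning $C_{\text{A2A,Univ}}(M)$ down to a small constant that is absorbed into the $H$-term, leaving $C=(\alpha+\beta\lceil\log_2 q\rceil)\cdot\lceil\log_{p+1}Z\rceil$.

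\textbf{Main obstacle.} The only real subtlety is bookkeeping the bit-reversal index map $j\mapsto j'$ consistently: the matrix appearing in~\eqref{eq:loosePhase} is $\bfD_Z\bfP$ rather than $\bfD_Z$, and one must check that this is precisely the form that the Section~\ref{section:DFT} routine is designed to compute, so that the row-wise invocation indeed delivers $\coded{x}_{i,j}$ and not some permuted variant. Everything else is routine substitution into the cost formulas.
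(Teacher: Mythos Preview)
Your proposal is correct and follows essentially the same route as the paper: the paper's argument is nothing more than the one-line observation preceding the theorem, namely that the draw phase contributes $C_\text{A2A,Univ}(M)$ while the loose phase contributes $H$ rounds at one field element each, and these are added. Your write-up simply fleshes out the correctness and counting portions that the paper leaves implicit in the derivations~\eqref{eq:omega-ij}--\eqref{eq:drawPhase}, and your cost accounting matches the paper's exactly.
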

\begin{remark}
    Note that draw-and-loose can compute any Vandermonde matrix. 
    Yet, significant gains w.r.t prepare-and-shoot will be possible in cases where~$H$ is large.
    In particular, in cases where~$H=0$ the draw-and-loose algorithm does not provide gains over the universal prepare-and-shoot.
\end{remark}
Similar to Lemma~\ref{lemma:invertibilityOfLoose}, the above algorithm is invertible.
\begin{lemma}\label{lemma:invertibilityOfDrawLoose}
    Draw-and-loose can compute the inverse of a Vandermonde matrix, with the same communication cost.
\end{lemma}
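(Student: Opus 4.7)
The plan is to exploit the modular structure of draw-and-loose: the algorithm realizes the target Vandermonde matrix~$\bfA$ as a composition of three linear maps, namely a column-wise multiplication by the Vandermonde matrix~$\bfV_M$ (draw phase), a local diagonal scaling by~$\diag(\alpha_i^j)$, and a row-wise multiplication by the (permuted) DFT matrix~$\bfD_Z\bfP$ (loose phase). Each of the three maps is invertible, and to compute~$\bfA^{-1}$ it suffices to apply the inverses in reverse order with the same scheduling but modified coding coefficients.

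First I would observe that the loose phase is implemented by the algorithm of Section~\ref{section:DFT}, which by Lemma~\ref{lemma:invertibilityOfLoose} can be used to compute~$(\bfD_Z\bfP)^{-1}$ with the same~$C_1$ and~$C_2$; running this row-wise first allows every processor~$\rmP_{i,j}$, starting from~$\coded{x}_{i,j}$, to recover~$f_j(\alpha_i)$. Next, each processor performs the local scaling by~$\alpha_i^{-j}$, which incurs no communication, to recover the value that would be produced by~$\bfV_M$ alone. Finally, because the draw phase is implemented by the universal prepare-and-shoot of Section~\ref{section:universal}, whose scheduling is fixed and whose coding scheme can be adjusted to compute any~$M\times M$ matrix, the column-wise processors can compute the Vandermonde inverse~$\bfV_M^{-1}$ using the very same scheduling as the forward draw phase, thereby recovering~$x_{i,j}$.

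Since each of the three stages is invertible in place and with identical round/communication cost to its forward counterpart, summing them yields the same total cost as in Theorem~\ref{theorem:Vandermonde}. The main ``obstacle'' is really just a bookkeeping matter: one must verify that~$\bfV_M$ is indeed invertible (which follows because~$\alpha_0^Z,\ldots,\alpha_{M-1}^Z$ are distinct by the injectivity of~$\varphi$ established in~\eqref{eq:omega-ij}) and that the diagonal factors~$\alpha_i^j$ are nonzero (immediate since~$\alpha_i=g^{\varphi(i)}\neq 0$). No new communication primitive or scheduling argument is required beyond the invertibility of the two building blocks.
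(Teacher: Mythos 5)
Your proposal is correct and follows essentially the same route as the paper: invert the loose phase first via Lemma~\ref{lemma:invertibilityOfLoose}, then invert the draw phase by computing~$\bfV_M^{-1}$ with prepare-and-shoot, using the fact that the scheduling of each forward stage supports its inverse at identical cost. The only additions over the paper's argument are the explicit treatment of the communication-free diagonal scaling and the verification that~$\bfV_M$ is invertible, both of which are harmless and correct.
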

\begin{proof}
The invertibility of the loose phase is given in Lemma~\ref{lemma:invertibilityOfLoose}.
In the draw phase, since the Vandermonde matrix~$\bfV_M$ defined in~\eqref{eq:define-vm} is invertible, this step can be inverted by computing the inverse of~$\bfV_M$ using prepare-and-shoot. 
Together, the inverse of a Vandermonde matrix can be computed by first inverting the loose phase, and computing the inverse of~$\bfV_M$ with prepare-and-shoot.
\end{proof}

\section{Decentralized Encoding for systematic Reed-Solomon codes}\label{section:reedsolo}

In this section, we show that the decentralized encoding of a systematic Reed-Solomon code can be completed by performing two consecutive draw-and-looses.
This fact allows further optimization of the decentralized encoding process.

As discussed in~\cite[Ch.~5.1.2]{roth2006coding}, an~$[N=K+R,K]$ generalized Reed Solomon (GRS) code has a generator matrix
\small\begin{equation}\label{eq:gen-matrix-grs}
\bfG_\text{GRS}=
  \left[\begin{array}{c|c} \bfV_\alpha&\bfV_\beta\end{array}\right]
\cdot
\diag(u_0,\ldots,u_{K-1},v_{0},\ldots,v_{R-1}),
\end{equation}\normalsize
\small
\begin{equation*}
\mbox{with}~
    \bfV_\alpha=
    \begin{bmatrix}
        1 & \ldots & 1 \\
        \alpha_0 & \ldots & \alpha_{K-1}\\
        \vdots & \ddots & \vdots \\
        \alpha_0^{K-1} & \ldots & \alpha_{K-1}^{K-1}\\
    \end{bmatrix},
    \bfV_\beta=\begin{bmatrix}
        1 & \ldots & 1 \\
        \beta_0 & \ldots & \beta_{R-1}\\
        \vdots & \ddots & \vdots \\
        \beta_0^{K-1} & \ldots & \beta_{R-1}^{K-1}\\
    \end{bmatrix}
\end{equation*}\normalsize
being two Vandermonde matrices defined by distinct elements~$\alpha_0,\ldots,\alpha_{K-1},\beta_{0},\ldots,\beta_{R-1}\in\bbF_q$, and~$u_0,\ldots,u_{K-1},$
$v_0,\ldots,v_{R-1}\in\bbF_q$ being non-zero (but not necessarily distinct) elements.

In coded distributed systems, the systematic property is usually desired as it allows the~$K$ sources to store the raw data.
The GRS code in~\eqref{eq:gen-matrix-grs} has a systematic form given by
\small\begin{equation}\label{eq:gen-matrix-systematic}
    \bfG_\text{SGRS}=\left[\begin{array}{c|c} \bfI&\bfA\end{array}\right],~\mbox{where}~\bfA=(\bfV_\alpha\cdot \bfP)^{-1}\cdot \bfV_\beta\cdot \bfQ,
\end{equation}\normalsize
with~$\bfP\triangleq\diag(u_0,\ldots,u_{K-1})~\mbox{and}~\bfQ\triangleq\diag(v_{0},\ldots,v_{R-1})$.
Note that by~\cite[Theorem~1]{roth1985generator},
the matrix~$\bfA$ in~\eqref{eq:gen-matrix-systematic} satisfies
\begin{equation}\label{eq:define-cauchy-like}
    \bfA_{k,r}= c_kd_r/(\beta_r-\alpha_k),~\mbox{where}
\end{equation}
\small\begin{equation*}
    c_k =  \frac{u_k^{-1}}{\prod_{t\in[0,K-1],t\neq k} (\alpha_k-\alpha_{t}) }~\mbox{and}~d_r =v_{r}{\prod_{k\in[0,K-1]}(\beta_r-\alpha_{k})}.
\end{equation*}\normalsize
Note that the submatrices~$\bfA_0,\ldots,\bfA_{M-1}$ as defined in~\eqref{eq:submatrix-of-A} or~\eqref{eq:submatrix-of-A-horizontal} are all of this form, which are referred~\emph{Cauchy-like}.

In the case of~$K\geq R$, we provide a decomposition of every submatrix~$\bfA_m$ in the following theorem, which enables to use draw-and-loose (Section~\ref{section:vandermonde}).

\begin{theorem}\label{theorem:cauchy}
    Let~$\bfA$ be the non-systematic submatrix of~$\bfG_\text{SGRS}$ as in~\eqref{eq:gen-matrix-systematic}, and let~$\bfA_m$ be the~$m$-th submatrix of~$\bfA$ as defined in~\eqref{eq:submatrix-of-A}.
    Let~$\bfV_{\alpha,m}\in\bbF_q^{R\times R}$ be a Vandermonde matrix defined by the vector~$(\alpha_{i})_{i\in\cS_m}$ with~$\cS_m\triangleq[mR,(m+1)R-1]$.
    Then,
    \begin{equation}\label{eq:v-identity-for-c}
        \bfA_m= (\bfV_{\alpha,m}\cdot \Phi_m )^{-1}\bfV_\beta\cdot\Psi,
    \end{equation}
    where~$\Phi_m=\diag(\varphi_{m,0},\ldots,\varphi_{m,R-1})$, with
    \small\begin{equation}\label{eq:define-varphi}
        \textstyle\varphi_{m,s}=u_{mR+s}\prod_{\substack{j\in[0,K-1],j\not\in\cS_m}}\left(\alpha_{mR+s}-\alpha_j\right),s\in[0,R-1],
    \end{equation}\normalsize
    and~$\Psi_m=\diag(\psi_0,\ldots,\psi_{R-1})$, with
    \small\begin{equation}\label{eq:define-psi}
    \textstyle\psi_r= v_r\prod_{\substack{j\in[0,K-1],j\not\in\cS_m}}(\beta_r-\alpha_j),r\in[0,R-1].
    \end{equation}\normalsize
\end{theorem}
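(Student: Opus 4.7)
The plan is to recognize that $\bfA_m$ is itself a Cauchy-like matrix of size $R\times R$, and then invoke the same generator-matrix identity of~\cite[Theorem~1]{roth1985generator} that was used to establish~\eqref{eq:define-cauchy-like}, but now applied at the smaller scale corresponding to the row-block $\cS_m$.

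First, I would read off from~\eqref{eq:define-cauchy-like} that for $s,r\in[0,R-1]$ the $(s,r)$-entry of $\bfA_m$ is
\[
(\bfA_m)_{s,r}=\bfA_{mR+s,r}=\frac{c_{mR+s}\,d_r}{\beta_r-\alpha_{mR+s}},
\]
so $\bfA_m$ is a Cauchy-like matrix with row-evaluation points $\{\alpha_{mR+s}\}_{s\in[0,R-1]}$ and column-evaluation points $\{\beta_r\}_{r\in[0,R-1]}$. The goal therefore reduces to writing this Cauchy-like matrix as the systematic non-identity part of a suitably scaled Vandermonde pair.

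Second, I would introduce the auxiliary $[2R,R]$ GRS-like generator matrix
\[
\bigl[\,\bfV_{\alpha,m}\ \big|\ \bfV_\beta\,\bigr]\cdot\diag(\varphi_{m,0},\ldots,\varphi_{m,R-1},\psi_0,\ldots,\psi_{R-1}),
\]
with $\varphi_{m,s}$ and $\psi_r$ as in~\eqref{eq:define-varphi}--\eqref{eq:define-psi}. By the same derivation that produced~\eqref{eq:gen-matrix-systematic}--\eqref{eq:define-cauchy-like} from~\eqref{eq:gen-matrix-grs}, the systematic form of this auxiliary code has non-systematic part $(\bfV_{\alpha,m}\cdot\Phi_m)^{-1}\bfV_\beta\cdot\Psi_m$, and this matrix is Cauchy-like with entries $\tilde c_s\tilde d_r/(\beta_r-\alpha_{mR+s})$, where
\[
\tilde c_s=\frac{\varphi_{m,s}^{-1}}{\prod_{t\in[0,R-1],t\neq s}(\alpha_{mR+s}-\alpha_{mR+t})},\qquad \tilde d_r=\psi_r\prod_{s\in[0,R-1]}(\beta_r-\alpha_{mR+s}).
\]
Thus the identity~\eqref{eq:v-identity-for-c} will follow once I verify $\tilde c_s=c_{mR+s}$ and $\tilde d_r=d_r$.

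Third, the verification is a direct bookkeeping step using the partition $[0,K-1]=\cS_m\,\cup\,([0,K-1]\setminus\cS_m)$. For $\tilde c_s$, substituting the definition of $\varphi_{m,s}$ in~\eqref{eq:define-varphi} makes the denominator of $\tilde c_s$ equal to $u_{mR+s}$ times the product over $t\in[0,R-1]\setminus\{s\}$ combined with the product over $j\in[0,K-1]\setminus\cS_m$, which by the partition is exactly $u_{mR+s}\prod_{t\in[0,K-1],\,t\neq mR+s}(\alpha_{mR+s}-\alpha_t)$, matching the denominator of $c_{mR+s}$. The analogous expansion for $\tilde d_r$, using~\eqref{eq:define-psi}, multiplies $v_r$ by the two complementary products in $k$ over $\cS_m$ and over $[0,K-1]\setminus\cS_m$, reproducing $d_r$ exactly. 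The main obstacle is precisely this product bookkeeping: ensuring that the ``missing factors'' deliberately absorbed into $\varphi_{m,s}$ and $\psi_r$ are exactly those needed to upgrade the small-scale Cauchy-like coefficients $\tilde c_s,\tilde d_r$ into the global ones $c_{mR+s},d_r$. Beyond this, the argument is a straightforward invocation of the already-cited identity from~\cite{roth1985generator}.
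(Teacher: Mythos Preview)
Your proof is correct, but it takes a different route from the paper's. The paper works directly with Lagrange basis polynomials: it identifies the rows of $\bfV_{\alpha,m}^{-1}$ and of $\bfV_\alpha^{-1}$ with the coefficient vectors of the small and large Lagrange interpolants $\ell_s(\xi)$ and $L_i(\xi)$, factors $L_{mR+s}(\xi)=\ell_s(\xi)\cdot\prod_{j\notin\cS_m}\frac{\xi-\alpha_j}{\alpha_{mR+s}-\alpha_j}$, and then evaluates at $\xi=\beta_r$ to read off $(\bfA_m)_{s,r}=\varphi_{m,s}^{-1}\,\ell_s(\beta_r)\,\psi_r$. Your argument instead invokes the Roth--Seroussi Cauchy-like identity twice, once globally and once for the $R\times R$ block, and then matches the resulting scalar factors $\tilde c_s,\tilde d_r$ against $c_{mR+s},d_r$. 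The underlying product-splitting along $[0,K-1]=\cS_m\cup([0,K-1]\setminus\cS_m)$ is the same in both proofs; what differs is packaging. Your approach is shorter and more modular (it treats \cite[Theorem~1]{roth1985generator} as a black box rather than re-deriving its content via Lagrange polynomials), while the paper's approach is more self-contained and makes the polynomial mechanism explicit, which may be pedagogically useful given that the surrounding section is built around polynomial evaluation.
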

\begin{proof}
Since~$\bfV_{\alpha,m}$ is invertible, it follows that for~$s,r\in[0,R-1]$, the element~$(\bfV_{\alpha,m}^{-1})_{s,r}=\ell_{s,r}$ is a coefficient of the Lagrange basis polynomial
\begin{align}\label{eq:lag-basis-1}
    \ell_s(\xi)&=\ell_{s,0}+\ell_{s,1}\xi+\cdots+\ell_{s,R-1}\xi^{R-1}\nonumber\\
             &\textstyle=\prod_{r\in[0,R-1],r\neq s}\frac{\xi-\alpha_{mR+r}}{\alpha_{mR+s}-\alpha_{mR+r}}.
\end{align}
Let~$i=mR+s\in\cS_m$ for some~$s\in[0,R-1]$ and~$m\in[0,M-1]$.
Similarly,~$L_{i,s}=(\bfV_\alpha^{-1})_{i,s}$ is a coefficient of the Lagrange basis polynomial
\small\begin{align}\label{eq:lag-basis-2}
    &L_{i}(\xi)=L_{i,0}+\cdots+L_{i,K-1}\xi^{K-1}=\prod_{\substack{j\in[0,K-1],j\neq i}}\frac{\xi-\alpha_j}{\alpha_i-\alpha_j}\nonumber\\
    &=\prod_{\substack{j\in\cS_m,j\neq i}}\frac{\xi-\alpha_j}{\alpha_i-\alpha_j}\prod_{\substack{j\in[0,K-1],j\not\in\cS_m}}\frac{\xi-\alpha_j}{\alpha_i-\alpha_j}\nonumber\\
    &=\prod_{\substack{r\in[0,R-1],r\neq s}}\frac{\xi-\alpha_{mR+r}}{\alpha_{mR+s}-\alpha_{mR+r}}\prod_{\substack{j\in[0,K-1],j\not\in\cS_m}}\frac{\xi-\alpha_j}{\alpha_i-\alpha_j}\nonumber\\
    &\overset{\eqref{eq:lag-basis-1}}{=}\ell_s(\xi) \prod_{{j\in[0,K-1],j\not\in\cS_m}}\frac{\xi-\alpha_j}{\alpha_i-\alpha_j} \nonumber\\
    &\textstyle=\left(\prod_{\substack{j\in[0,K-1]\\j\not\in\cS_m}}\frac{1}{\alpha_i-\alpha_j}\right)\ell_s(\xi)\left(\prod_{\substack{j\in[0,K-1]\\j\not\in\cS_m}}(\xi-\alpha_j)\right).
\end{align}\normalsize
 Clearly,~$[\bfV_{\alpha,m}^{-1}\bfV_\beta]_{s,r}=\ell_s(\beta_r)$ and~$(\bfV_{\alpha}^{-1}\bfV_\beta)_{i,r}=L_{i}(\beta_r)$.
Therefore, with~$\varphi_{m,s}$ and~$\psi_r$ are as defined in~\eqref{eq:define-varphi} and~\eqref{eq:define-psi}, respectively, it follows that
\begin{equation*}
    \begin{split}
        (\bfA_m)_{s,r}&\overset{\eqref{eq:gen-matrix-systematic}}{=} u_{mR+s}^{-1}\cdot (\bfV_{\alpha}^{-1}\bfV_\beta)_{mR+s,r}\cdot v_r\\
        &= u_{mR+s}^{-1}\cdot L_{mR+s}(\beta_r)\cdot v_r\overset{\eqref{eq:lag-basis-2}}{=}\varphi_{m,s}^{-1}\cdot\ell_s(\beta_r)\cdot\psi_r,
    \end{split}
\end{equation*}
and as a result,~$\bfA_m= (\bfV_{\alpha,m}\cdot \Phi_m )^{-1}\bfV_\beta\cdot\Psi$.
\end{proof}

Theorem~\ref{theorem:cauchy} allows computing~$\bfA_m$ by performing two consecutive all-to-all encode operations, described as follows.
For~$m\in[0,M-1]$ and~$s\in[0,R-1]$, every source processor~$\rmS_{s,m}$ first scales its initial data by~$\varphi_{m,s}^{-1}$, and obtains a new data packet~$\varphi_{m,s}^{-1}x_{mR+s}$.
Then, with the new data packets, source processors in~$\cS_m$ perform all-to-all encode to compute the~\emph{inverse} of~$\bfV_{\alpha,m}$; this is made possible by the invertibility of draw-and-loose 
(Lemma~\ref{lemma:invertibilityOfDrawLoose}).

Next, they perform all-to-all encode on~$\bfV_{\beta}$ using draw-and-loose, and the result is scaled by~$\psi_r$ for the required coded packet.
This yields the following theorem.
\begin{theorem}\label{theorem:cauchy-c}
    The above algorithm computes matrix~$\bfA_m$ with
    \small\begin{equation*}
        C=\alpha\cdot 2\ceil{\log_{p+1}R}+\beta\ceil{\log_2q} \cdot \left( C_2(\bfV_{\alpha,m})+C_2(\bfV_\beta)\right).
    \end{equation*}\normalsize
\end{theorem}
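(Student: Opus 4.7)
The plan is to read off the algorithm directly from the decomposition in Theorem~\ref{theorem:cauchy}, which gives $\bfA_m=\Phi_m^{-1}\bfV_{\alpha,m}^{-1}\bfV_\beta\Psi$, so that the required product $(\bfx_{mR},\ldots,\bfx_{(m+1)R-1})\cdot\bfA_m$ factors as the chain of four matrix multiplications
$\bfx\mapsto\bfx\Phi_m^{-1}\mapsto\bfx\Phi_m^{-1}\bfV_{\alpha,m}^{-1}\mapsto\bfx\Phi_m^{-1}\bfV_{\alpha,m}^{-1}\bfV_\beta\mapsto(\cdots)\Psi$.
The first step is to verify that the two outer multiplications (by the diagonal matrices $\Phi_m^{-1}$ and $\Psi$) are purely local: source processor $\rmS_{s,m}$ can scale its own input by $\varphi_{m,s}^{-1}$ before any communication starts, and sink processor indexed by $r$ can scale its final received value by $\psi_r$ after communication ends. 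Neither of these incurs any rounds or field elements of communication.

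The second step is to implement the two inner, non-trivial multiplications as all-to-all encode operations. The multiplication by $\bfV_{\alpha,m}^{-1}$ is an all-to-all encode with the coding matrix equal to the inverse of an $R\times R$ Vandermonde matrix, which by Lemma~\ref{lemma:invertibilityOfDrawLoose} can be carried out by draw-and-loose at the same cost as computing $\bfV_{\alpha,m}$ itself. The multiplication by $\bfV_\beta$ is a direct application of draw-and-loose as in Theorem~\ref{theorem:Vandermonde}. Because the second multiplication consumes the output of the first, the two invocations must be executed sequentially rather than in parallel.

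The third step is cost accounting: since the invocations are sequential, the two contributions to both $C_1$ and $C_2$ simply add. For $C_2$, this immediately gives $C_2(\bfV_{\alpha,m})+C_2(\bfV_\beta)$ field elements, matching the stated bound. For $C_1$, one invokes the round count of a single draw-and-loose run on an $R\times R$ Vandermonde matrix, which under the parameter regime where draw-and-loose attains $\ceil{\log_{p+1}R}$ rounds (see Theorem~\ref{theorem:Vandermonde} and Corollary~\ref{theorem:optimalityOfDrawAndLoose}) produces a total of $2\ceil{\log_{p+1}R}$ rounds. Multiplying by the per-round startup $\alpha$ and per-element cost $\beta\ceil{\log_2 q}$ gives the formula in the theorem.

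The main subtle point I anticipate is the clean statement of the $C_1$ term: Theorem~\ref{theorem:Vandermonde} expresses the round count through the factorization $R=M\cdot P^H$ rather than directly as $\ceil{\log_{p+1}R}$, so one should note the regime (e.g.\ $P=p+1$ with $R$ a power of $p+1$, or more generally whenever the draw phase is small compared to the loose phase) in which the two expressions coincide, and otherwise absorb lower-order terms into the constants. Once this parameter choice is made explicit, the rest of the argument is a straightforward concatenation of the two draw-and-loose runs with the two diagonal scalings.
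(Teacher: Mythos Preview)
Your proposal is correct and mirrors the paper's own argument essentially verbatim: the paper presents exactly this factorization $\bfA_m=(\bfV_{\alpha,m}\Phi_m)^{-1}\bfV_\beta\Psi$, scales locally by $\varphi_{m,s}^{-1}$, runs draw-and-loose for $\bfV_{\alpha,m}^{-1}$ via Lemma~\ref{lemma:invertibilityOfDrawLoose}, then draw-and-loose for $\bfV_\beta$, and finally scales by $\psi_r$, with the theorem stated as an immediate consequence without a separate proof. Your observation about the $C_1$ term requiring the parameter regime of Theorem~\ref{theorem:Vandermonde}/Corollary~\ref{theorem:optimalityOfDrawAndLoose} to read cleanly as $\ceil{\log_{p+1}R}$ is in fact more careful than the paper, which simply asserts the round count without comment.
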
 

In the case of~$K<R$, we provide a decomposition of~$\bfA_m$ that is similar to Theorem~\ref{theorem:cauchy}, and hence the proof is omitted.
\begin{theorem}\label{theorem:cauchy-1}
    Let~$\bfA$ be the non-systematic submatrix of~$\bfG_\text{SGRS}$ in~\eqref{eq:gen-matrix-systematic}, and let~$\bfA_m$ be the~$m$-th submatrix of~$\bfA$ as defined in~\eqref{eq:submatrix-of-A-horizontal}.
    Let~$\bfV_{\beta,m}\in\bbF_q^{K\times K}$ be a Vandermonde matrix defined by the vector~$(\beta_{i})_{i\in\cT_m}$, and define~$\bfQ_m\triangleq\diag((v_r)r\in\cT_m)$, with~$\cT_m\triangleq[mK,(m+1)K-1]$.
    Then,
    \begin{equation}\label{eq:v-identity-for-c-1}
        \bfA_m= (\bfP\cdot \bfV_{\alpha})^{-1}\bfV_{\beta,m}\cdot \bfQ_m.
    \end{equation}
\end{theorem}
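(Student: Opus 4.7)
The plan is to mirror the argument of Theorem~\ref{theorem:cauchy} but exploit that, in the regime $K < R$, the submatrix $\bfA_m$ arises by extracting \emph{columns} of $\bfA$ rather than rows. Column restriction commutes cleanly with right multiplication, so no Lagrange basis rewriting is required and the decomposition should follow by a direct computation.

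Concretely, I would start from $\bfA = (\bfV_\alpha \bfP)^{-1} \bfV_\beta \bfQ$ as given in~\eqref{eq:gen-matrix-systematic}, and use the block form~\eqref{eq:submatrix-of-A-horizontal} to identify $\bfA_m$ with the restriction of $\bfA$ to the columns indexed by $\cT_m = [mK, (m+1)K - 1]$. Since the left factor $(\bfV_\alpha \bfP)^{-1}$ is a $K \times K$ matrix that does not depend on the column index at all, column restriction acts only on $\bfV_\beta \bfQ$. A direct inspection shows that the $r$-th column of $\bfV_\beta \bfQ$ equals $v_r \cdot (1, \beta_r, \beta_r^2, \ldots, \beta_r^{K-1})^\intercal$, so gathering these columns for $r \in \cT_m$ reassembles precisely the $K \times K$ product $\bfV_{\beta,m} \bfQ_m$, with $\bfV_{\beta,m}$ the Vandermonde matrix on the evaluation points $(\beta_r)_{r \in \cT_m}$ and $\bfQ_m = \diag((v_r)_{r \in \cT_m})$. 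Concatenation of the two observations then yields the claimed identity
\[
    \bfA_m \;=\; (\bfV_\alpha \bfP)^{-1}\, \bfV_{\beta,m}\, \bfQ_m.
\]

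Unlike Theorem~\ref{theorem:cauchy}, there is no conceptual obstacle: no Lagrange basis manipulation is needed here because $\bfV_\alpha$ is already $K \times K$ (the dimension of the target square submatrix), so no smaller Vandermonde needs to be carved out of it, and no correction diagonal of the form of $\Phi_m$ in~\eqref{eq:define-varphi} arises. The only step that requires care is bookkeeping: one must order the columns of $\bfV_\beta$ and the diagonal entries of $\bfQ$ consistently with $\cT_m$, so that $\bfV_{\beta,m}$ is built from the tuple $(\beta_{mK}, \beta_{mK+1}, \ldots, \beta_{(m+1)K-1})$ in the matching order used by $\bfQ_m$. This is consistent with the paper's remark that the proof is omitted because of its similarity to, and indeed greater directness than, Theorem~\ref{theorem:cauchy}.
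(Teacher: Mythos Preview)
Your argument is correct and matches the approach the paper implicitly intends: the paper omits the proof entirely, citing its similarity to Theorem~\ref{theorem:cauchy}, and your column-restriction observation is exactly the simplification that makes this case more direct. Note only that the theorem as printed has $(\bfP\cdot\bfV_\alpha)^{-1}$ whereas your derivation (correctly, from~\eqref{eq:gen-matrix-systematic}) yields $(\bfV_\alpha\cdot\bfP)^{-1}=\bfP^{-1}\bfV_\alpha^{-1}$; this appears to be a typographical slip in the paper's statement rather than a gap in your reasoning.
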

Using a similar strategy, we have the following.
\begin{theorem}\label{theorem:cauchy-c-1}
    The above algorithm computes matrix~$\bfA_m$ with
    \small\begin{equation*}
        C=\alpha\cdot 2\ceil{\log_{p+1}K}+\beta\ceil{\log_2q} \cdot (C_2(\bfV_{\alpha})+C_2(\bfV_{\beta,m})).
    \end{equation*}\normalsize
\end{theorem}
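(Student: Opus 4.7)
The plan is to mirror the strategy used for Theorem~\ref{theorem:cauchy-c}, adapted to the $K<R$ regime where the per-column all-to-all encode of the second phase of the framework now involves $K$ sinks rather than $R$ sources. First I would invoke Theorem~\ref{theorem:cauchy-1} to rewrite $\bfA_m=(\bfP\cdot\bfV_\alpha)^{-1}\bfV_{\beta,m}\cdot\bfQ_m$, which turns the task of computing $\bfA_m$ into a pipeline of four operations: a diagonal pre-scaling by $\bfP^{-1}$, an all-to-all encode for $\bfV_\alpha^{-1}$, another all-to-all encode for $\bfV_{\beta,m}$, and finally a diagonal post-scaling by $\bfQ_m$.

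Since $\bfP$ and $\bfQ_m$ are diagonal, the two scalings can be performed by each processor $\rmT_{k,m}$ entirely on its own data---multiplying its input by $u_k^{-1}$ before any communication takes place and multiplying its output by $v_{mK+k}$ after all communication has terminated. These local steps therefore contribute nothing to either $C_1$ or $C_2$.

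For the two remaining all-to-all encode operations I would apply the draw-and-loose algorithm of Section~\ref{section:vandermonde}. Computing $\bfV_\alpha^{-1}$ is handled by Lemma~\ref{lemma:invertibilityOfDrawLoose}, which guarantees the same communication cost as computing $\bfV_\alpha$, namely $\ceil{\log_{p+1}K}$ rounds and $C_2(\bfV_\alpha)$ field-element transmissions. Computing $\bfV_{\beta,m}$ uses draw-and-loose directly, with cost $\ceil{\log_{p+1}K}$ rounds and $C_2(\bfV_{\beta,m})$ transmissions. The two operations are run sequentially, so both round counts and transmission counts add, yielding exactly the claimed total $C=\alpha\cdot 2\ceil{\log_{p+1}K}+\beta\ceil{\log_2q}\cdot(C_2(\bfV_\alpha)+C_2(\bfV_{\beta,m}))$.

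I do not anticipate any real obstacle, since the argument is essentially a direct transposition of that for Theorem~\ref{theorem:cauchy-c}, with $R$ and $K$ interchanged to reflect the different per-column dimension of the sink grid. The only subtlety worth double-checking is that each column of the $K\times M$ sink grid indeed hosts exactly the $K$ processors required for a $K\times K$ Vandermonde all-to-all encode, and that the assignment of diagonal entries $u_k$ and $v_{mK+k}$ to processor $\rmT_{k,m}$ agrees with the decomposition of Theorem~\ref{theorem:cauchy-1}. Both facts are immediate from the grid layout described in Section~\ref{section:framework}.
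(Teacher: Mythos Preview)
Your proposal is correct and follows exactly the approach the paper intends: the paper itself omits the proof, stating only ``Using a similar strategy, we have the following,'' and your argument---invoking the decomposition of Theorem~\ref{theorem:cauchy-1}, handling the diagonal factors locally, and applying draw-and-loose (with Lemma~\ref{lemma:invertibilityOfDrawLoose} for the inverse) sequentially to the two $K\times K$ Vandermonde matrices---is precisely that similar strategy spelled out.
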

 
Overall, with specially chosen~$\bfV_\alpha$ and~$\bfV_\beta$ that define the generator matrix of~$\bfG_\textbf{SGRS}$ in~\eqref{eq:gen-matrix-systematic}, every~$\bfA_m$ can be computed by two consecutive draw-and-loose operations.
Note that this method doubles the required number of rounds, and hence is suitable for systems with relatively small start-up time~$\alpha$.
Together with the general framework introduced in Section~\ref{section:framework}, and assume that performing all-to-all encode operation for the Cauchy-like matrix takes~$C_\text{CL}$ as given in Theorem~\ref{theorem:cauchy-c} or Theorem~\ref{theorem:cauchy-c-1}, the decentralized encoding for such a systematic Reed-Solomon code requires
\begin{equation*}
    C= C_\text{BR}(\ceil{\textstyle\frac{K}{R}})+C_\text{CL}.    
\end{equation*}

\begin{remark}\label{remark:lagrange}
    Lagrange matrices were recently popularized for their use in Lagrange coded computing (LCC)~\cite{yu2019lagrange}, which concerns the problem of evaluating a polynomial~$f$ on a dataset~$\{x_1,\ldots,x_{K-1}\}$ with~$N>K$ worker nodes.
    In a nutshell, the input data~$x_0,\ldots,x_{K-1}$ is used to interpolate a polynomial~$g$ at~$\alpha_0,\ldots,\alpha_{K-1}\in\bbF_q$, and hence~$x_k=g(\alpha_k)$.
    The coded data are evaluations of~$g$ at~$\beta_0,\ldots,\beta_{N-1}\in\bbF_q$, i.e.,~$\coded{x}_n=g(\beta_n)$.
    The worker nodes compute~$f(\coded{x}_0),\ldots,f(\coded{x}_{N-1})$, from which the actual results~$f(x_0),\ldots,f(x_{K-1})$ can be decoded.

    Hence, the coded data satisfies
    \begin{equation*}
        (\coded{x}_0,\ldots,\coded{x}_{N-1})=(x_0,\ldots,x_{K-1})\cdot\bfV_\alpha^{-1}\cdot\bfV_\beta,
    \end{equation*}
    where~$\bfV_\alpha$ and~$\bfV_\beta$ are Vandermonde matrices specified by the~$\alpha_i$'s and~$\beta_i$'s, respectively. 
    Define Lagrange matrix~$\bfL_{\alpha,\beta}=\bfV_\alpha^{-1}\cdot\bfV_\beta$; it is Cauchy-like as introduced in~\eqref{eq:define-cauchy-like} with~$u_k=v_r=1$ for every~$k\in[0,K-1]$ and~$r\in[0,R-1]$. 

    If~$\alpha_k=\beta_k$ for~$k\in[0,K-1]$, then~$\bfL_{\alpha,\beta}$ is a systematic one.
    Otherwise, it is non-systematic.
    Either way, the all-to-all encode algorithm for Cauchy-like matrices presented in this section can be employed in decentralized encoding collectively with the frameworks introduced in Section~\ref{section:framework} or Appendix~\ref{appendix:non-sys}.
\end{remark}

\section{Discussion and Future Work}\label{section:discussion}
In this paper, we studied the decentralized encoding problem in linear network model.
We provided communication-efficient solution to this problem based on a newly defined collective communication operation called all-to-all encode.
We provide a universal solution which applies to any systematic linear code, and for the widely used Reed-Solomon code, we provide a specific method with a significant gain in communication costs over the universal one. 
While we assumed that the communication links between any two processors are noiseless, our methods can be easily integrated into noisy environments.
In such environments, the processors apply some error-correcting code over their sent packets prior to sending them, and the received packets undergo the respective decoding process.
Future work includes extending our results to Reed-Muller codes, to storage-specific codes such as regenerating or locally recoverable codes, and applying computation and storage constraints on the processors.



\printbibliography

\begin{appendices}

\section{Implementation of broadcast/reduce}\label{app:broadcast}
The broadcast and reduce operations have been extensively investigated in the literature.
Based on structures ranging from simple binomial tree and binary tree to more complicated ones such as circulant graph~\cite{traff2008optimal}, hypercube~\cite{ho1991optimal}, Fibonacci Trees~\cite{bruck1992multiple}, or a combination of two trees~\cite{sanders2009two}, numerous algorithms have been proposed to optimize communication cost with respect to different port number ($1$- verses multi-), start-up overhead~$\alpha$, per-unit cost~$\beta$, and data size~$W$.

When~$W$ is small, the folklore $(p+1)$-nomial tree broadcast may be used, which incurs a communication cost of
\begin{equation*}
    C_\text{BR}(N,W)=(\alpha +\beta \ceil{\log_2q} W)\cdot \ceil{\log_{p+1}N}.
\end{equation*}
This algorithm is used in the implementation of the universal all-to-all encode operation in Section~\ref{section:universal}.
However, it is not suitable for broadcasting big data (i.e. vector with large~$W$).
In a nutshell, processors that have not yet obtained the data have to wait for delivery from the processors that have received; this process may be tedious since the data are passed as a whole, and during which the bandwidth is not fully utilized.

To this end, pipelined algorithms are introduced in which processors may segment the data into smaller packets and send each independently~\cite[Chapter 13.1.2]{sanders2019sequential}.
With some tree topology defined in which the broadcaster serves as the root, a pipelined algorithm is developed in the following way:
Every vertex forwards every received packet to its children, and the algorithm concludes after every leaf processor has received the last packet.
The performance of this design depends on the underlying tree structure; for a thorough introduction to this line of algorithms, see~\cite[Chapter 13.1.2]{sanders2019sequential}.

To the best of our knowledge, the optimal broadcast algorithm in the one-port model is given in~\cite{traff2008optimal}, which incurs a communication cost of
\begin{equation*}
\textstyle    C_\text{BR}(N,W)=\left(\sqrt{(\ceil{\log_2 N}-1)\alpha}+\sqrt{\beta\ceil{\log_2q} W}\right)^2. 
\end{equation*}
Meanwhile, the optimal multi-port broadcast algorithm is provided in~\cite{bar1999broadcasting}.
By segmenting the data vector into~$m$ packets, the communication cost is
\begin{equation*}
    C_\text{BR}(N,W)=\left(\ceil{\textstyle\frac{m}{p}}+ \ceil{\log_{p+1}N}  \right)\cdot\left(\alpha + \beta\cdot\ceil{\textstyle\frac{W\ceil{\log_2q}}{m}}\right).
\end{equation*}

For a complete examination of the performance of broadcast and reduce operations, see~\cite{pjevsivac2007performance}.

\section{Extension to non-systematic codes}\label{appendix:non-sys}
Although systematic codes are generally preferred in real-world scenarios for their simplicity, certain applications require the code to be non-systematic.
For example, LCC~\cite{yu2019lagrange} employs a non-systematic Lagrange matrix in order to prevent the worker node from learning the initial data.
As such, we extend the proposed decentralized encoding framework to non-systematic codes, in which the~$N=K+R$ coded data vectors
\begin{equation*}
    \left(\coded{\bfx}^\intercal_0 ,\ldots,\coded{\bfx}^\intercal_{N-1}\right) = \left(\bfx^\intercal_0,\ldots,\bfx^\intercal_{K-1}\right)\cdot \bfG
\end{equation*}
are computed from the initial data vectors using a non-systematic generator matrix~$\bfG\in\bbF_q^{K\times N}$.

\subsection{The case of \texorpdfstring{$K>R$}{K>R}}\label{section:app-case-1}
Given a generator matrix~$\bfG\in\bbF_q^{K\times N}$, we create~$\bfG'$ by placing an arbitrary matrix~$\bfB\in\bbF_q^{R\times K}$ at its bottom, i.e.,
\begin{equation*}
    \bfG'\triangleq\left[\begin{array}{ c  }
        \bfG  \\
    \hline
    \bfB
  \end{array}\right]\in\bbF_q^{N\times N}.
\end{equation*}

Next, we enable the~$R$ sink processors to hold zero data packets~$\bfx_K,\ldots,\bfx_{N-1}$.
Therefore, the coded packets are
\begin{equation*}
    \left(\coded{\bfx}_0^\intercal,\ldots,\coded{\bfx}_{N-1}^\intercal\right)=\left(\bfx_0^\intercal,\ldots,\bfx_{K-1}^\intercal,\mathbf{0}^\intercal,\ldots,\mathbf{0}^\intercal\right)\cdot \bfG'
\end{equation*}
and hence the decentralized encoding can be completed by one all-to-all encode operation for computing~$\bfG'$ among all~$N$ processors.
The communication cost is then
$
    C=C_\text{A2A}(\bfG').
$
Note that the choice of~$\bfB$ does not affect the decentralized encoding results since the sink processors hold~$\mathbf{0}$'s.

\begin{figure}
    \centering
    \scalebox{0.62}{\input{figures/example-framework-3}}
    \caption{An example of completing decentralized encoding for non-systematic matrix~$\bfG$ when~$p=1$,~$K=4$ and~$R=27$.
        Sinks~$\rmT_0,\ldots,\rmT_{23}$ are placed in a~$4\times 6$ grid, while sinks $\rmT_{24},\rmT_{25}$ and~$\rmT_{26}$ are stacked at the bottom of the first, the second, and the third column, holding empty packets.
        In phase one, sources broadcast their initial data to others in the same row.
        In phase two, processors in each column~$m$ perform all-to-all encode to compute~$\bfG'_m$.
    }
    \label{fig:framework-3}
\end{figure}

\subsection{The case of \texorpdfstring{$K\leq R$}{K>R}}

Let~$M$ be the least integer such that~$M\cdot K>R$, and let~$L=R\mod K$.
We treat the matrix~$\bfG\in\bbF_q^{K\times N}$ as a concatenation of square matrices~$\bfG_0,\ldots,\bfG_{M-1}\in\bbF_q^{K\times K}$ followed by a (not necessarily square, and potentially empty) matrix~$\bfG_{M}\in\bbF_q^{K\times L}$, i.e.,
$
    \bfG=[        \bfG_0 \cdots  \bfG_{M-1}~\bfG_{M}]. 
$

For convenience of indexing, we organize the~$R$ sinks in a~$K\times M$ grid.
For~$r=k+(m-1)\cdot K$, processor~$\rmT_r$, now renamed as $\rmT_{k,m}$, is placed in row~$k\in[0,K-1]$ and column~$m\in[M]$.
For ease of demonstration, the source processors are prepended as column~$0$ of the processor grid.
Next, we evenly distribute processors in the incomplete column~$M$ to other columns.
An illustrative example can be found in Figure~\ref{fig:framework-3}.

The proposed framework involves two phases described as follows.
In phase one, we initiate~$K$ instances of one-to-all broadcast in parallel, which allows each source to share its initial data with other processors
in the same row, i.e., for every~$k\in[0,K-1]$,~$\bfx_k$ is disseminated to processor~$\rmT_{k,m}$ for~$m\in[M-1]$.

In phase two, for~$m\in[M-1]$, processors in column~$m$ perform all-to-all encode for matrix
\begin{equation*}
    \bfG_m'\triangleq \left[ \frac{\begin{array}{ c|c  }
        \bfG_m  & \bfG_{M,m}  \end{array}}{\bfB}\right]
\end{equation*}
using the method introduced in Appendix~\ref{section:app-case-1}, where~$\bfG_{M,m}$ contains the columns of~$\bfG_M$ that corresponds to the processors that distributed to column~$m$ of the grid, and~$\bfB$ may be any matrix that makes~$\bfG'_m$ a square matrix (for an illustrative example, see Figure~\ref{fig:framework-3}).
Together, the communication cost is then
\small
\begin{align*}
    C= C_\text{BR}(\ceil{\textstyle\frac{R}{K}+1},W)
    + \max\left[ {C_\text{A2A}(\bfG'_0),\ldots,C_\text{A2A}(\bfG'_{M-1})}\right].
\end{align*}
\normalsize
\end{appendices}

\end{document}